\let\originalcomma=\,
\let\originalcolon=\:
\let\originalsemicolon=\;
\newif\ifconf
\title{Adding an Abstraction Barrier to ZF Set Theory}
\author{Ciar\'an Dunne \and J.~B.~Wells \and Fairouz Kamareddine}
\institute{Heriot-Watt University}
\newcommand{\jbwshortnote}[1]{\todo[color=red!40]{Joe: #1}}
\newcommand{\jbwlongnote}[1]{\todo[inline, color=red!40]{Joe: #1}}
\newcommand{\cmdshortnote}[1]{\todo[color=blue!40 ]{Ciaran: #1}}
\newcommand{\cmdlongnote}[1]{\todo[inline, color=blue!40]{Ciaran: #1}}
\newcommand{\var}[1]{\mathsf{v#1}}
\newcommand{\cceq}{\coloncolonequals}
\newcommand{\elem}{\mathrel{\dot{\in}}}
\newcommand{\abbreviates}{\mathrel{\colonequals}}
\DeclareRobustCommand{\riota}{\rotatebox[origin=C]{180}{$\iota$}}
\newcommand{\bind}[3]{#1\,#2:#3}
\newcommand{\defdes}[2]{\bind\riota{#1}{#2}}
\newcommand{\Exists}[2]{\bind\exists{#1}{#2}}
\newcommand{\exOneSym}{\exists!}
\newcommand{\exOne}[2]{\bind\exOneSym{#1}{#2}}
\newcommand{\lang}{\mathsf{Form}}
\newcommand{\Var}{\mathsf{Var}}
\newcommand{\zfp}{ZFP}
\newcommand{\zf}{ZF}
\newcommand{\zin}{\mathrel{\widehat{\in}}}
\newcommand{\pl}{\mathrel{\pi_1}}
\newcommand{\pr}{\mathrel{\pi_2}}
\newcommand{\allS}{\forall_{\mathsf{Set}}\,}
\newcommand{\allP}{\forall_{\mathsf{Pair}}\,}
\newcommand{\exS}{\exists_{\mathsf{Set}}\,}
\newcommand{\exP}{\exists_{\mathsf{Pair}}\,}
\newcommand{\unS}{\riota_{\mathsf{Set}}\,}
\newcommand{\unP}{\riota_{\mathsf{Pair}}\,}
\newcommand{\set}[1]{\{#1\}}
\newcommand{\AllSuchThat}[2]{\{\,#1\mid#2\,\}}
\newcommand{\pair}[2]{\langle #1,#2\rangle}
\newcommand{\Ord}{\mathsf{Ord}}
\newcommand{\Set}{\mathsf{Set}}
\newcommand{\Pair}{\mathsf{Pair}}
\newcommand{\zpl}{\mathrel{\widehat{\pi}_1}}
\newcommand{\zpr}{\mathrel{\widehat{\pi}_2}}
\newcommand{\pow}{\mathcal{P}}
\newcommand{\Union}{\cup\,}
\renewcommand{\succ}[1]{{#1}^+}
\newcommand{\W}{\mbox{\textbf{W}}}
\newcommand{\V}{\textbf{V}}
\spnewtheorem*{nota}{Notation}{\bfseries}{\normalfont}
\begin{document}
%\TraceExec
\makeatletter
% the following is in recent llncs.cls versions.
% it is needed to avoid triggering a very-difficult-to-diagnose hyperref failure, where hyperref's
% trickery is not robust enough to handle llncs.cls's use of \and (which calls \stepcounter) in
% \author.
\def\toclevel@author{999}%
\@ifundefined{keywords}
  {\def\keywords#1{KEYWORDS NOT IMPLEMENTED.  USE A MORE RECENT llncs.cls.}}
  {}%
\makeatother

\maketitle
%{\centering \yyyymmdddate\today~T~\currenttime \par}%
\begin{abstract}
\parindent=10pt\relax
Much mathematical writing exists that is,
explicitly or implicitly, based on set theory,
often Zermelo-Fraenkel set theory (\zf)
or one of its variants.
In {\zf}, the domain of discourse contains only sets, and
hence every mathematical object must be a set.
Consequently, in {\zf} with the usual encoding of an ordered pair $\pair{a}{b}$, formulas like
$\{a\} \in \pair{a}{b}$ have truth values, and operations like $\pow{(\pair{a}{b})}$ have results
that are sets.
Such `accidental theorems' do not match how people think about the mathematics and also cause practical difficulties when using set theory in machine-assisted theorem proving.
In contrast, in a number of proof assistants, mathematical
objects and concepts can be built of type-theoretic stuff so that
many mathematical objects can be, in essence, terms of an extended typed $\lambda$-calculus%
% where each such term exists in a type which itself is a typed $\lambda$-term
.
However, dilemmas and frustration arise when formalizing mathematics in type theory.

Motivated by problems of formalizing mathematics with (1)~purely
set-theoretic and (2)~type-theoretic approaches, we explore an option
with much of the flexibility of set theory and some of the useful
features of type theory.
We present {\zfp}: a modification of {\zf} that has ordered pairs as primitive, non-set objects.
% {\zfp} specifies a domain of discourse consisting of two distinct kinds
% of containers: sets and ordered pairs.
{\zfp} has a more natural and abstract axiomatic definition of ordered pairs free of any notion of representation.
This paper presents axioms for {\zfp}, and a proof in {\zf}
(machine-checked in Isabelle/ZF) of the existence of a model for
{\zfp}, which
implies that {\zfp} is consistent if {\zf} is.
We discuss the approach used to add this abstraction barrier to {\zf}%
% and how it might be generalised to other `types'
.
\keywords{set theory \and formalisation of mathematics \and theorem proving}
\end{abstract}

\section{Introduction}\label{sec:introduction}

\subsection{Background: Set Theory  and Type Theory as Foundations}
%Why set theory
A large portion of the mathematical literature is based on set theory, explicitly or implicitly,
directly or indirectly.
Set theory is pervasive in mathematical culture.
%, so only having the option to formalise in type theory is unfortunate, and potentially strays from the author's intent.
University mathematics programmes have introductory courses on set
theory and many other courses that rely heavily on set-theoretic concepts (sets,
classes, etc.), notation (comprehensions a.k.a.\ set-builders, power
set, etc.), and reasoning.

Formal foundations for mathematics have been developed since the early
20th century, with both set-theoretic and type-theoretic approaches
being considered.
Although there are a number of set-theoretic foundations, for this
paper it is sufficient to consider Zermelo-Fraenkel set theory (\zf),
which anyway seems to be broadly accepted and reasonably representative of the strengths and
weaknesses of set theory in actual practice.
The core concept of {\zf} is the \emph{set membership relation} $\in$, which
acts on a domain of objects called \emph{sets}.
The theory is a collection of formulas (known as \emph{axioms}) of
first-order logic which characterise the membership relation.
% The only formal facts about sets are how they are
% related by membership.
Logical deduction from these axioms
% --- either informally, or in some deductive system ---
yields a rich theory of sets.
Moreover, mathematical objects such as ordered pairs,
functions, and numbers can be represented as sets in {\zf}.
% good but not needed now:
%
% Some foundational mathematics texts perform this bootstrap explicitly,
% but most texts do not fully characterize the foundation they are
% working within or the representations of the objects.

At roughly the same time as Zermelo was formulating his axiomatic set
theory, Russell introduced the first type theory.
Both Zermelo and Russell had the goal of rigorous, formal, logical
reasoning free from the paradoxes that plagued the earlier systems of
Cantor and Frege.
Most modern type theories are descendants of Church's typed
$\lambda$-calculus~\cite{kubota}.
Many of the methods of modern type theory have been developed by
computer scientists to solve problems in programming
languages and formal verification.
Types add layers of reasoning that help with soundness and representation independence.
Some type theories have been used to formulate foundations of
mathematics in which mathematical objects (e.g., groups, rings, etc.)
are represented by terms and types of what is essentially a very fancy typed
$\lambda$-calculus.

\jbwlongnote{Maybe discuss the Curry-Howard correspondence.}
\jbwlongnote{Type-theoretic proof systems such as Coq, Agda, Lean.}
\jbwlongnote{Important: Compare with HOL!}
\cmdlongnote{More benefits of type theory?}

%Why not type theory
Formalizing mathematics that has been developed in a set-theoretic
culture using a type-theoretic foundation can lead to
dilemmas and frustration~\cite{harrison}.
Subtyping may not work smoothly when formalising chains of structures
such as the number systems and those belonging to universal
algebra.
There are also design choices in how to model predicates which can make proving some things easier
but other things much harder.
The rules of powerful type systems are also very complicated, so users
require machine assistance to follow the typing rules, and even with
machine support it can be quite challenging.
In contrast, ZF-like set theories typically have very few
`types', e.g., there might be a type of sets and a type of logical formulas or perhaps a type of
classes.
When nearly every mathematical object you need is of `type set' it is easy to obey the typing
rules.

There are problems formalizing mathematics in pure ZF set theory also.
When everything is of `type set', a computer proof system has
no easy way to know that it would be wasting its time to try to prove
a theorem about ordinal numbers using lemmas and tactics for groups or
rings, so automated support is more challenging.
When representing mathematical objects (e.g., numbers) as sets, the
bookkeeping of the
intended `type' of these objects is not avoided, but must be managed by the user outside the realm
of a type system.
In many not-too-tricky cases, a type inference algorithm can
automatically infer type information that represents necessary
preconditions for successful use of theorems and lemmas, but in pure set theory such
automated inference is not very useful when the only type is `set'.

Furthermore, practical computerisation in {\zf} requires abbreviation and definition mechanisms
which first-order logic does not provide.
Two contrasting examples of how this can be done are Metamath and Isabelle/ZF.
Metamath~\cite{metamath} is mostly string based, and has `syntax definitions' to introduce new constants, or syntax patterns.
These definitions are given meaning by `defining axioms' (whose correctness is not checked by the verifier).
Isabelle/ZF is built on top of Isabelle/Pure, which is a fragment of intuitionistic higher-order
logic that is based on Church's typed $\lambda$-calculus~\cite{paulson}.
This means that meta-level activities such as variable binding, definitions, and abbreviations are
handled by Isabelle/ZF in a type theory, albeit a very simple type theory.
Isabelle also handles proof tactics in SML, which can be seen as another typed $\lambda$-calculus.

% Set theory has philosophical advantages.

% the following discussion is good but not needed for this particular paper:
%
% We are aware that for every set $A$ we have a corresponding predicate, which asks $x\in A$.
% However, Wiedijk~\cite{wiedijk} notes that in Cantor's naive theory, every predicate can be viewed as a set which contains all sets satisfying that predicate.
% This leads to troublesome paradoxes, so some restriction is clearly needed.
% Zermelo's approach uses the axiomatic method to carve out the positive space of existing sets, to rule out those which are `too big' to be sets.
% Whilst this may seem like a hindrance, it actually makes {\zf} very powerful (so much so that Weidijk calls it a `hack').
% In first-order logic, we cannot quantify over predicates, though we are able to quantify over the domain of sets.
% Because each set is a predicate, we are effectively quantifying over a
% domain of predicates, avoiding those that could potentially lead to
% paradox.
% In the case of {\zf}, this selection is wide enough to give us a
% formal understanding of most of mathematics.\cmdlongnote{%
%   Surely this actually means that sets have the same kind of power as
%   types?
%   Sets can either be terms (because mathematical objects are
%   represented as sets), types (collections which mathematical objects
%   belong to), theorems (predicates as stated above, using $\cup$ and
%   $\cap$ for logical disjunction/conjunction)?}

% Despite set theory being a well-known foundation of mathematics, only a small number of theorem provers are based on it.

\jbwlongnote{list computer proof systems that target ALL mathematics and (1)~use a set-theoretic
  foundation of mathematics, (2)~use a type-theoretic foundation of mathematics, or (3)~don't fit in
  either of the two previous categories.}

% ? {\zf} lacks \emph{practical expressivity}, due to its inability to encode meta-theoretic knowledge such as abbreviations and theorems.

\subsection{The Issue of Representation and the Case of the Ordered Pair}\label{sec:issue-representation}

% Indeed, most mathematicians are not concerned with the representation of objects, so long as the assumptions made about them are consistent with the rest of the mathematics.

%Everything is a set/Representation
As discussed above, set theory can represent a multitude of mathematical objects as sets,
but in some cases the user might prefer that some of their mathematical objects are genuinely not
sets.
The alternative of using a sophisticated type-theoretic foundation might not be the right solution,
for a variety of reasons, some of which are mentioned above.
So the user might ask: ``May I please have a set theory which has genuine non-sets that I can use for
purpose XYZ?''

There are indeed set theories with non-set objects~\cite{holmes}, which are generally known as \emph{urelements},
so named because they are often considered to be primordial, existing independently of and before
the sets.
A popular use for urelements is as `atoms' whose only properties are being distinct from everything
else and existing in large enough multitudes.
Adding genuine non-sets takes some work, because the assumption that `everything is a set' is deeply
embedded in {\zf}'s axioms.
One example is the axiom of Extensionality,
$$
  \forall x,y: (\forall a: a\in x \leftrightarrow a \in y) \rightarrow x=y
$$
which asserts that any two objects are equal if they have exactly the same set members.
Because non-set objects of course have no set members, this {\zf} axiom forces them all to equal the
empty set, meaning there can not be any.

Existing set theories with urelements generally (except see GST below) do not consider urelements
with `internal' structure that might include sets.
%Ordered pairs
The \emph{ordered pair} is a simple and important example of a mathematical object with `internal'
structure which is not usually intended to be viewed as a set.
Ordered pairs have been of enormous value in building theories of relations, functions, and spaces.
% Kuratowski's definition is the modern choice for encoding ordered pairs as sets.
The most widely used set-theoretical definition, by Kuratowski, defines the ordered pair
$\pair{a}{b}$ to be the set $\{\{a\},\{a,b\}\}$.
Because $a$ is in all sets in $\pair{a}{b}$ and $b$ is only in one, a first-order logic formula
using only the membership relation can check if an object is the first (or second) projection of an ordered
pair.
Kuratowski pairs satisfy the characteristic property of ordered pairs:
$$
  \pair{a}{b} = \pair{c}{d} \leftrightarrow (a = c \wedge b = d)
$$
Like for any {\zf} representation of mathematical objects not thought of as sets, Kuratowski pairs have
`accidental theorems' such as $\set{b} \in \pair{b}{c}$, and
\(
   \set{\pair{b}{b}}
  =\set{\set{\set{b}}}
  =\pair{\set{b}}{\set{b}}
\),
and $\set{\pair{0}{0}}=\pair{1}{1}$ with Von Neumann numbers.

The set representation of conceptually non-set objects raises issues.
There are places in the literature where some mathematical objects are thought of as (or even
explicitly stated to be) non-sets with no set members.
% In informal non-machine-checked practice this can lead to saying and writing false statements,
% although they might not cause further trouble.
One can find
definitions or proofs by cases on `type'
that assume the case of sets never overlaps with the cases of
pairs, numbers, etc.
To view such writing as being founded on pure set theory requires either proving that none of the
sets used overlap with the set representations used for abstract objects or inserting many tagging
and tag-checking operations (see, e.g., the translation we give in \autoref{def:translation} as part
of proving a model for our system {\zfp} can be built in the pure set theory {\zf}).
When formalizing and machine-checking mathematics, additional difficulties arise, some of which are
mentioned above.

\subsection{ZFP: Extending ZF Set Theory with Primitive Ordered Pairs}

We aim to go beyond previous set theories with urelements to develop methods for extending set
theories with genuine non-set objects whose internal structure can contain other objects including
the possibility of sets.
As a first instance of this aim, we achieve the objective of {\zfp}, a set theory with primitive
non-set ordered pairs such that there is no limit on the `types' of objects that sets and ordered
pairs may contain.
We axiomatise {\zfp} and prove its consistency relative to {\zf}.
We hope that our explanation of how we did this will be useful guidance for other work extending set
theories.

{\zfp} extends $\in$ with two new binary predicate symbols, $\pl$ and $\pr$, whose intended meanings
are `is first projection of' and `is second projection of'.
We define abbreviations for formulas $\Set(x)$ and $\Pair(x)$ that distinguish sets and ordered
pairs by the rule that an ordered pair has a first projection and a set does not.
{\zfp}'s axioms are in two groups, one for sets and one for ordered pairs.
%{\zfp}'s axioms specify that sets and the new primitive ordered pairs behave as they usually do.
We were able to generate nearly all of {\zfp}'s axioms for sets by modifying the axioms of {\zf}
by restricting quantifiers using $\Set(x)$ in the right places.
The axiom of Foundation needed to be modified to handle sets and ordered pairs
simultaneously\jbwlongnote{ and to implement our decision that there could not be infinite
  descending chains via ordered pairs which enabled building a model in a way similar to building models for
  ZF}.
{\zfp}'s axioms for ordered pairs specify the expected abstract properties, including that ordered
pairs have no set members.
%to prevent confusion between sets and ordered pairs.

% That is, ${\Set}(x) \rightarrow \neg (\exists a: a \pl x \vee a \pr x)$, and $\Pair(p) \rightarrow \neg (\exists a: a \in p)$.
% These predicates also work well for type constraints on quantifiers; we can define abbreviations for $\allS, \allP, \exS, \exP$.

To prove {\zfp} is consistent if {\zf} is, we construct in {\zf} a model and prove it satisfies
{\zfp}'s axioms~\cite{enderton}\jbwshortnote{purpose of citation unclear to reader}.
Building a model for a set theory with non-set objects with `internal' structure that can include
sets differs from building a model for a set theory with no urelements or with only simple
urelements, because there can be new non-set objects at each stage of the construction.
\W, the domain of our model, is similar to the domain {\V} of the Von Neumann hierarchy.
Each tier of {\V} is constructed by taking the power set of the previous tiers.
In contrast, when building the tiers of \W, each successor tier $W_{\alpha^+}$ is formed by taking the
disjoint sum of the power set $\pow(W_{\alpha})$ and the cartesian product $(W_\alpha)^2$.
Hence every object in {\W} has a tag that tells whether it is intended to model a set or an ordered
pair.
This supports defining relations that model ZFP's $\in$, $\pl$, and $\pr$ which may only return true
when their second argument is of the correct `type'.
This proof has been machine-checked in Isabelle/{\zf}.%
\footnote{See \url{http://www.macs.hw.ac.uk/~cmd1/cicm2020/ZFP.thy} for the source, and
  \url{http://www.macs.hw.ac.uk/~cmd1/cicm2020/ZFPDoc/index.html} for the HTML.}

Although our model for {\zfp} is built purely of sets and implements ordered pairs as sets,
another model could use other methods (e.g., type-theoretic) and
implement ordered pairs differently.
Hence, we have put an `abstraction barrier' between the user of {\zfp} and the implementation of
ordered pairs.

\subsection{Related Work}

Harrison~\cite{harrison}
% talk ``Let's Make Set Theory Great Again!'' at AITP 2018
details the challenges that face both type-theoretic and set-theoretic foundations for formalised
mathematics.
Harrison makes the case for using set theory as `machine code', leaving theorem proving to layers of
code\jbwshortnote{huh?}.
Harrison suggests using a set theory with urelements to avoid the issue of `accidental theorems'.
% huh?: Our approach is different in that we want to make set theory more expressive - so that it
% can work more smoothly with the higher layers of code.
%
% a very good history, criticism and survey of formalised mathematics
Weidijk~\cite{wiedijk}
% `Is ZF a Hack?'
formulates axiomatic set theories and type theories in AutoMath in order to compare them and assess
their relative complexity.

\jbwlongnote{Discuss Mizar!}

A significant work aiming to make computer formalisation of set-theoretical mathematics practical is
Farmer's Chiron~\cite{farmer}, a conservative extension of the set theory
NBG (itself a conservative extension of {\zf}).
Chiron has additional features such as support for
undefinedness, definite descriptions,
quotation and evaluation of expressions, and a kind of types.
% huh?: However, Chiron's types are built on top, we can define functions with abstraction barriers,
% but all of the individuals are still sets.

Aczel and Lunnon worked on Generalised Set Theory (GST)~\cite{aczel} with the aim of better
supporting work in situation theory\jbwshortnote{weird to mention Lunnon but not cite anything by
  her}.
GST extends set theory with a mechanism for primitive functions, as well as a number of
other features.
It appears that GST assumes the Anti-Foundation axiom instead of Foundation which {\zf} uses.
Unfortunately, we failed to find a specification of the axioms of GST.
Part of GST seems similar to our work but a technical comparison is difficult without the
axioms.
% detailed with an algebraic approach
% urelements

Although ordered pairs now seem obvious, Kanamori's excellent history~\cite{kanamori}
shows a sequence of conceptual breakthroughs were needed to reach the modern ordered pair.
How we built a model for {\zfp} was heavily inspired by the way
Barwise~\cite{barwise} interprets KPU (Kripke-Platek set theory with Urelements) in KP.

\jbwlongnote{Holmes has a big book based I think on NFU in which ordered pairs are considered as
  urelements.
  We must cite and discuss this!}

\subsection{Outline}

\Autoref{sec:fol-zf} presents and discusses the first-order logic we use and definitions and axioms of {\zf}.
\Autoref{sec:zfp} presents and discusses {\zfp} in the form of definitions and two collections of axioms,
one for sets, and one for ordered pairs.
\Autoref{sec:zfp-model} proves the existence in {\zf} of a model for the axioms of {\zfp} (which implies that
{\zfp} is consistent if {\zf} is)%
% and also discusses the version of this proof that is machine-checked in Isabelle/{\zf}
.
\Autoref{sec:future} discusses the significance of these results, and how they will be used in further
investigation.

\section{Formal Machinery}\label{sec:fol-zf}

% used only in MBNF: a meta-level notion of membership $\elem$
% The formulas of the language are the atomic formula made from terms, implication, negation, and universal quantification.
%We adopt a stricter notion of signatures than most authors, by excluding constants and function symbols, and only allowing binary predicate symbols.

Let $X\abbreviates Y$ be meta-level notation meaning that $X$ stands for $Y$.
% Let $\metaeq$ be equality at the meta-level.

\subsection{First-Order Logic with Equality}

We use a fragment of first-order logic (FOL) with equality sufficient for defining {\zf} and {\zfp}.
We consider only four binary infix \emph{predicate symbols} including equality.
The MBNF~\cite{mbnf} specification of the syntax is:
$$%
  \begin{array}{@{}l@{\qquad}l@{}}
      a,\ldots,z   \elem \Var          \cceq\var{0} \mid \var{1} \mid \cdots
    &
      {\sim} \elem \mathsf{Pred} \cceq {\in} \mid \pi_1 \mid \pi_2 \mid {=}
  \\
      A, \ldots, Z \elem \mathsf{Term} \cceq x \mid \riota x: \varphi
    &
      \varphi,\psi \elem \lang         \cceq X\mathbin{\sim}Y \mid \varphi\rightarrow\psi\mid\neg\varphi\mid\forall x:\varphi
  \end{array}
$$%
We work with terms and formulas modulo $\alpha$-conversion where $\forall x$ and $\riota x$ bind
$x$.
Except where explicitly specified otherwise, we require metavariables ranging over the set $\Var$ to
have the attribute of \emph{distinctness}.
Two different metavariables with the distinctness attribute can not be equal.
For example, $x=\var{9}$ and $x_1=\var{27}$ and $y=\var{53}$ could hold simultaneously, but neither
$x=\var{9}=x_1$ nor $x=\var{53}=y$ are allowed.
This restriction applies only to metavariables: the same object-level variable can be used in nested
scopes, e.g., the formula
$(\bind{\forall}{\var{7}}{\bind{\forall}{\var{7}}{\var{7}\in\var{7}}})$ is fine and equal to
$(\bind{\forall}{\var{0}}{\bind{\forall}{\var{1}}{\var{1}\in\var{1}}})$.
We assume the usual abbreviations for logical connectives ($\wedge$, $\vee$, $\leftrightarrow$),
for quantifiers ($\exists$, $\exOneSym$, $\forall x_1,\ldots, x_n$, $\exists x_1, \ldots, x_n$), and
for predicate symbols ($\ne$, $\notin$, $\ni$).
\jbwlongnote{Specify precedence and associativity, including for operators defined by
  abbreviations.}

A term can be a \emph{definite description} $(\defdes x\varphi)$ which, if there is
exactly one member $x$ of the domain of discourse such that the formula $\varphi$ is true, evaluates
to that member and otherwise evaluates to a special value $\bot$ outside the domain of discourse such
that any predicate symbol (including equality) with $\bot$ as an argument evaluates to false.%
\footnote{When working with functions that might be applied outside their domain, one might prefer to
  have $\bot=\bot$, but this is a bit more complex and not needed for this paper.}
\relax
A term is said to be \emph{undefined} or to \emph{have no value} iff it evaluates to $\bot$.
An alternative specification of definite descriptions that gives formulas the same meanings
is eliminating them by
the following rule (only the left case is given; the
right case is similar):
$$
    ((\defdes x\varphi)\mathbin{\sim} Y)
  \abbreviates
      (\Exists{x}
         {       x \mathbin{\sim} Y
          \wedge \varphi})
    \wedge
      \exOne{x}{\varphi}
  \mbox{ where $x$ is not free in $Y$}
$$

\subsection{Zermelo-Fraenkel Set Theory}

The only predicate symbols {\zf} uses are the \emph{membership relation} $\in$ and equality.
{\zf} makes no use of the FOL predicate symbols $\pl$ and $\pr$, but instead we define these symbols
as parts of abbreviations in~\autoref{sec:ordered-pairs-zf}.
We use the following abbreviations
% for bounded quantification on sets, the subset relation, the empty set, prefix union, binary
% union, the power set, set literals, and set comprehensions
where $n\geq 3$ and $a$, $c$, $x$, $y$, and $z$ are not free in the other arguments and $b$ is not
free in $X$:
$$%
  \begin{array}{@{}l@{\quad}l@{}}
      (\forall b\in X:\varphi)   \abbreviates (\forall b: b\in X \rightarrow \varphi)
    &
      (\exists b \in X: \varphi) \abbreviates (\exists b: b\in X \wedge \varphi)
  \\
      \Union X \abbreviates (\riota y: \forall a: a\in y \leftrightarrow \exists z\in X: a\in z)
    &
      X \subseteq Y \abbreviates (\bind{\forall}{c\in X}{c\in Y})
  \\
      \{A,B\} \abbreviates (\bind{\riota}{x}{\forall c: c \in x \leftrightarrow (c = A \vee c = B)})
    &
      X \cup Y \abbreviates \Union\{X,Y\}
  \\
      \pow(X) \abbreviates (\bind{\riota}{y}{\bind{\forall}{z}{z \in y \leftrightarrow z \subseteq X}})
    &
      \{A\} \abbreviates \{A,A\}
  \\
      \{A_1,\ldots, A_n\} \abbreviates \{A_1\} \cup \{A_2,\ldots, A_n\}
    &
      \emptyset \abbreviates (\riota x: \forall a: a\notin x)
  \\
      \multicolumn{2}{@{}l@{}}
        {               \AllSuchThat{b \in X}{\varphi}
           \abbreviates (\bind{\riota}{y}
                           {\forall b: b \in y \leftrightarrow (b \in X \wedge \varphi)})
         \quad
           \succ{X} \abbreviates X \cup \set{X}
        }
  \end{array}
$$
These abbreviations are defined if their arguments are defined due to the axioms.
\jbwlongnote{Warning: These are essentially macros, not predicate symbols or function symbols
  specified by axioms.
  For brevity they don't take the necessary care to handle undefined arguments in an intuitive way.
  For example both $\pow(\defdes{x}{\neg(x=x)})=\emptyset$ and
  $\set{\emptyset}\cup(\defdes{x}{\neg(x=x)})=\set\emptyset$ hold.}

\begin{definition}
  The axioms of {\zf}
  are all the instances of the following formulas for every formula $\varphi$ with free variables at
  most $a$, $b$, $c_1$ and $c_2$.
  \begin{enumerate}
    \item Extensionality: $\forall x,y: (\forall a: a \in x \leftrightarrow a \in y) \rightarrow x = y$

    \item Union: $\forall x: \exists y: \forall a: a \in y \leftrightarrow (\exists z\in x: a \in z)$

    \item Power Set: $\forall x: \exists y: \forall z: z \in y \leftrightarrow z \subseteq x$

    \item Infinity (ugly version; see pretty version below):
      \(
        \exists y:        (\bind{\exists}{z\in y}
                             {\bind{\forall}{b}{b\notin z}})
                   \wedge\penalty0\relax
                          (\bind{\forall}{x\in y}
                             {\bind{\exists}{s\in y}
                                {\bind{\forall}{c}
                                   {c\in s \leftrightarrow (c\in x \vee c=x)}}})
      \)

    \item Replacement: $\forall c_1, c_2, x: (\forall a\in x: \exists! b: \varphi) \rightarrow (\exists y: \forall b: b \in y \leftrightarrow \exists a\in x : \varphi)$

    \item Foundation: $\forall x: x = \emptyset \vee (\exists y \in x: \neg \exists b \in x: b \in y)$
  \end{enumerate}
\end{definition}

The axioms are due to Zermelo, except for Replacement which is due to Fraenkel and
Skolem~\cite{ebbinghaus} and Foundation which is due to Von Neumann.
Extensionality asserts that sets are equal iff they contain the same members.
Union and Power Set state that $\Union X$ and $\pow(X)$ are defined if $X$ is defined;
this implies the domain of discourse is closed under $\cup$ and $\pow$.
Infinity states that there exists a set containing $\emptyset$ which is closed under the ordinal
successor operation; from this we can extract the Von Neumann natural numbers $\mathbb{N}$.
Here is a prettier presentation of Infinity that we do not use as the axiom to avoid bootstrap confusion%
\footnote{Provided \emph{some} object exists, Replacement can build $\emptyset$, and then further
  axiom use can build operations like $\set{B,C}$, $\set{B}$, $X \cup Y$, and $\succ{X}$, thus
  ensuring the terms $\emptyset$ and $\succ{x}$ are defined in the pretty version of Infinity.
  We prefer getting that initial object from an axiom over using the FOL assumption that the domain
  of discourse is non-empty.
  The only axiom giving an object for free is Infinity.
  We find it confusing to use Infinity in proving the definedness of subterms of itself, so we
  don't.}%
\relax
:
$$%
  \exists y: \emptyset\in y \wedge (\forall x\in y: \succ{x} \in y)
$$%
The powerful infinite axiom schema Replacement asserts the existence of the range of a function
determined by any formula $\varphi$
% not mentioning $x$ and $y$
where the values of the
variables $a$ and $b$ that make $\varphi$ true have a functional dependency of $b$ on $a$ and where
the domain of the function exists as a set.
Foundation enforces the policy that there are no infinite descending chains of the form $X_0 \ni X_1 \ni
\cdots$.

% As stated above, the Union and Power Set axioms state the existence of the sets that their terms describe.
% Extensionality ensures that these sets are unique.
% However the existence of $\emptyset$, $\{A,B\}$, and $A \cup B$ are not explicitly stated by the axioms.
% Texts on {\zf} sometimes include such axioms, to make the initial constructions of set theory more understandable.

% Redundant axioms
\begin{lemma}
  The following theorems of {\zf} are often presented as axioms.
  For every formula $\varphi$ such that any free variable must be $a$, the following hold
  in {\zf}:
  \begin{enumerate}
    \item
      Empty Set: $\exists x: \forall b: b\not\in x$
    \item
      Pairing: $\forall a,b: \exists x: \forall c: (c \in x \leftrightarrow (c = a \vee c = b))$
    \item
      Specification: $\forall x: \exists y: \forall a: (a \in y \leftrightarrow (a \in x \wedge \varphi))$
  \end{enumerate}
\end{lemma}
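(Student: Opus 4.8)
The plan is to derive (1)~Empty Set, (2)~Pairing, and (3)~Specification from the six {\zf} axioms, in that order, since both Pairing and Specification are most easily proved once the term $\emptyset$ is available. Empty Set is essentially free: the ugly version of Infinity asserts $\exists y: (\exists z\in y: \forall b: b\notin z)\wedge\cdots$, and discarding the second conjunct together with the relativisation $z\in y$ leaves $\exists z: \forall b: b\notin z$, which is~(1). Extensionality then makes the empty set unique, so the term $\emptyset$ is defined, and likewise Power Set makes $\pow(X)$ defined whenever $X$ is.

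For Pairing it suffices, after renaming the bound variables of the statement, to show that for all $p$ and $q$ there is a set $x$ with $\forall c: c\in x\leftrightarrow(c=p\vee c=q)$. Fix $p$ and $q$; the idea is to take a canonical two-element set and relabel its members with Replacement. Two uses of Power Set make $P\abbreviates\pow(\pow(\emptyset))$ defined, and one checks directly from Power Set that $\pow(\emptyset)$ has $\emptyset$ as its only member and that $P$ has exactly the two members $\emptyset$ and $\pow(\emptyset)$, which are distinct because $\emptyset\in\pow(\emptyset)$ whereas nothing lies in $\emptyset$. Now apply Replacement with $x\abbreviates P$, with the schema parameters set to $c_1\abbreviates p$ and $c_2\abbreviates q$, and with the Replacement formula $\psi\abbreviates(a=\emptyset\wedge b=c_1)\vee(a\neq\emptyset\wedge b=c_2)$, where $a$ and $b$ are the schema's own input and output variables. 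For each $a\in P$ there is a unique $b$ with $\psi$: if $a=\emptyset$ only the first disjunct can hold, forcing $b=c_1=p$; if $a=\pow(\emptyset)$ then $a\neq\emptyset$, so only the second disjunct holds, forcing $b=c_2=q$. Replacement then returns a set $y$ with $b\in y\leftrightarrow\exists a\in P:\psi$ for every $b$, and since $P$ has just those two members this reduces to $b\in y\leftrightarrow(b=p\vee b=q)$, so $y$ has the required property.

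For Specification, fix a set $x$ and a formula $\varphi$ with at most $a$ free, and split on whether $\exists a\in x:\varphi$. If this fails, then $\emptyset$ (from~(1)) works as $y$, since in that case $a\in x\wedge\varphi$ is false for every value of $a$, as is $a\in\emptyset$. Otherwise pick $a_0\in x$ with $\varphi[a_0/a]$, and apply Replacement with this $x$, with parameter $c_1\abbreviates a_0$, and with the Replacement formula $\psi\abbreviates(\varphi\wedge b=a)\vee(\neg\varphi\wedge b=c_1)$, whose free variables lie among $a$, $b$, $c_1$ as the schema requires. For every $a\in x$ exactly one $b$ satisfies $\psi$, namely $b=a$ when $\varphi$ holds and $b=a_0$ otherwise, so Replacement yields a set $y$ with $b\in y\leftrightarrow\exists a\in x:\psi$. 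If $b\in y$, some witness $a\in x$ satisfies $\psi$: either $\varphi$ holds and $b=a$, whence $b\in x$ and, as $b=a$, also $\varphi[b/a]$; or $\neg\varphi$ holds and $b=a_0$, whence $b\in x$ and, as $b=a_0$ and $\varphi[a_0/a]$ holds by the choice of $a_0$, again $\varphi[b/a]$. Conversely, if $b\in x$ and $\varphi[b/a]$, then $b$ itself witnesses $\exists a\in x:\psi$ via the first disjunct, so $b\in y$. Hence $y$ witnesses~(3).

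The substantive reasoning above is entirely propositional; the delicate points are bookkeeping. The case split in Specification is genuinely necessary --- the relabelling trick produces a total functional relation only once some element of $x$ really does satisfy $\varphi$, which is exactly why (1) must be established first and used as the fallback value of $y$ --- and I expect this to be the one real, if minor, obstacle. The other thing to watch is that, when invoking the Replacement schema, the fixed objects ($p$ and $q$ in Pairing, $a_0$ in Specification) must be supplied as the schema's parameters $c_1$, $c_2$ with the schema's bound variables kept disjoint from them, rather than being allowed to occur free; once that is arranged, the verifications are routine.
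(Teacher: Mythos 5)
Your derivation is correct and is the standard one the paper implicitly relies on (it states this lemma without proof; its footnote on Infinity sketches getting $\emptyset$ via Replacement, whereas you read it directly off the first conjunct of the ugly Infinity, which is equally fine). Your Replacement instances respect the schema's restriction to free variables $a,b,c_1,c_2$, the two-element seed $\pow(\pow(\emptyset))$ for Pairing is handled correctly, and the case split in Specification correctly addresses the totality requirement ($\forall a\in x:\exists!b$) of the paper's form of Replacement.
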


% Empty Set can be proved from the axiom of Infinity, and is unique by extensionality.
% Pairing requires a more complicated proof that involves using Replacement over $\pow(\pow(\emptyset))$.
% From this we can give values to the terms for singletons, unordered pairs, and enumerated sets.
% Specification on a set $x$ with $\varphi$ is gained by using Replacement over $x$ with $a = b \wedge \varphi$.
% Using the Specification lemma, we can give a definition for the infamous set-builder notation:

\subsection{Ordered Pairs in \zf}\label{sec:ordered-pairs-zf}

We define the Kuratowski ordered pair $\pair{A}{B}$ and related operations as follows where $a$,
$b$, $p$, and $x$ are not free in $A$, $B$, and $Q$:
$$%
  \begin{array}{@{}l@{}}
      \pair{A}{B} \abbreviates \{\{A\}, \{A,B\}\}
  \\
      A\pl Q \abbreviates (\forall x \in Q: A\in x)
    \qquad
      B\pr Q \abbreviates (\exists! x\in Q: B\in x)
  \\
      A \times B \abbreviates (\riota x: \forall p: p\in x \leftrightarrow (\exists c\in A, d\in B : p = \pair{c}{d}))
  \end{array}
$$%
We call $a$ and $b$ the \emph{first} and \emph{second projections} of $\pair{a}{b}$ respectively.
The first projection of an ordered pair $q$ is in all sets in $q$, whereas the second is only in
one.%
\footnote{This holds even in the case of $\pair{a}{a} =
  \{\{a\},\{a,a\}\} = \{\{a\}\}$.}
\relax
The projection relations $\pl$ and $\pr$ only give meaningful results when the set $Q$ on the right
side of the relation is an ordered pair, i.e., this holds\jbwshortnote{this is not the characteristic property of
  ordered pairs}:
$$
    (\exists c,d: Q = \pair{c}{d})
  \to
    (\forall a, b: (a \pl Q \wedge b \pr Q) \leftrightarrow Q = \pair{a}{b})
$$
Kuratowski ordered pairs are sets and have set members that are distinct from their projections.
In fact, no matter which representation we use, there will always exist some $x$ such that $x \in
\pair{a}{b}$ (for all but at most one ordered pair which can be represented by $\emptyset$).
%
% An important construction using ordered pairs is the cartesian product of two sets $A$ and $B$, the
% set of all ordered pairs $\pair{a}{b}$ such that $a \in A$ and $b \in B$.
%
If $A$ and $B$ are defined, we can show the cartesian product $A \times B$ is defined using
Replacement nested inside Replacement%
\footnote{%
  % $\pow(\pow(A \cup B))$ contains all sets of the form $\{\{C\}, \{C,D\}\}$ where $C\in A$ and
  % $D \in B$.
  % Using Specification we remove all of the members that are not Kuratowski ordered pairs in the
  % right order.
  The traditional construction of $A \times B$ as
  \(
    \AllSuchThat{p \in \pow(\pow(A \cup B))}
      {\exists c\in A, d\in B : p = \pair{c}{d}}
  \)
  is only needed if the weaker Specification is preferred over Replacement.
  We avoid the traditional construction because it depends on a set representation of ordered pairs
  and thus will not work for {\zfp}.}%
\relax
:
$$%
    A \times B
  = \Union\AllSuchThat{z}{\Exists {c\in A} z = \AllSuchThat{p}{\Exists{d\in B} p = \pair{c}{d}}}
$$%

\section{Extending ZF to ZFP}\label{sec:zfp}

%Abbreviations
This section introduces Zermelo-Fraenkel Set Theory with Ordered Pairs ({\zfp}), a set theory with
primitive non-set ordered pairs.
{\zfp} axiomatises the membership predicate symbol $\in$ similarly to {\zf}.
% We have axioms stating the existence of unions, power sets, an infinite set, and a replacement schema.
The ordered pair projection predicate symbols $\pl$ and $\pr$ are axiomatised in {\zfp}
instead of being abbreviations that use $\in$ as in {\zf}.
Ordered pairs in {\zfp} qualify as urelements because they contain no members via the set membership
relation $\in$, but they are unusual urelements because they can contain arbitrary sets via the
$\pl$ and $\pr$ relations.

\subsection{Definitions and Axioms of ZFP}\label{sec:defin-axioms-zfp}

We use the metavariables $p$, $q$, $P$, and $Q$ where it might help the reader to think `ordered pair',
and the metavariables $s$, $x$, $y$, $z$, $X$, $Y$, and $Z$ where it might help the reader to
think `set'; this convention has no formal status and all FOL variables continue to range over all
objects in the domain of discourse.
We call $b$ a \emph{member} of $x$ iff $b \in x$.
We call $b$ a \emph{projection} of $q$ iff $b \pl q$ or $b \pr q$.
An \emph{ordered pair} is any object with a projection, and a \emph{set} is any object that is not
an ordered pair.
We use the following abbreviations where $b$ is not free in $Q$ and $X$ and $q$ is not
free in $A$ and $B$:
% These are used as `type guards' to bound quantifications and descriptions so that we can form statements such as `there exists a set such that\ldots', or `for all pairs\ldots'.
% The axioms ensure this distinction.
$$%
  \begin{array}{@{}l@{\ \abbreviates\ }l@{\qquad}l@{\ \abbreviates\ }l@{}}
      \Pair(Q)         & \exists b: b\pl Q
    &
      \Set(X)          & \neg \Pair(X)
  \\
      \allP p: \varphi & \forall p: \Pair(p) \rightarrow \varphi
    &
      \allS x: \varphi & \forall x: \Set(x) \rightarrow \varphi
  \\
      \exP p: \varphi  & \exists p: \Pair(p) \wedge \varphi
    &
      \exS x: \varphi  & \exists x: \Set(x) \wedge \varphi
  \\
      \unP p: \varphi  & \riota p: \Pair(p) \wedge \varphi
    &
      \unS x: \varphi  &\riota x: \Set(x) \wedge \varphi
  \\
    (A,B) & (\riota q: A \pl q \wedge B \pr q)
  \end{array}
$$%
% When $\Set(x)$ (resp.\ $\Pair(p)$) is true, we say that $x$ is a set (resp.\ $p$ is an ordered pair (or simply a pair)).
We reuse the text of the abbreviation definitions for {\zf} for $\set{A,B}$,
$X\cup Y$, $\set{A}$, and $\set{A_1,\ldots,A_n}$ where $n\geq 3$.
We redefine the following abbreviations a bit differently for {\zfp},
where $a$, $b$, $c$, $p$, $x$, $y$, and $z$ are not free in $A$, $B$, $X$ and $Y$:
$$%
  \begin{array}{@{}l@{\ }l@{}}
    X \subseteq Y &\abbreviates\ \Set(X) \wedge \Set(Y) \wedge (\bind{\forall}{c\in X}{c\in Y})
  \\
    \Union X &\abbreviates\ (\unS y: \forall a: a\in y \leftrightarrow \exists z\in X: a\in z)
  \\
  %   \{A,B\} &\abbreviates\ (\bind{\riota}{x}{\forall c: c \in x \leftrightarrow (c = A \vee c = B)})
  % \\
  %   X \cup Y &\abbreviates\ \Union\{X,Y\}
  % \\
    \pow(X) &\abbreviates\ (\bind{\unS}{y}{\bind{\forall}{z}{z \in y \leftrightarrow z \subseteq X}})
  \\
    \emptyset &\abbreviates\ (\unS x: \forall a: a\notin x)
  \\
  %   \{A\} &\abbreviates\ \{A,A\}
  % \\
  %   \{A_1,\ldots, A_n\} &\abbreviates\ \{A_1\} \cup \{A_2,\ldots, A_n\}
  % \\
                 \AllSuchThat{b \in X}{\varphi}
    &\abbreviates\ (\bind{\unS}{y}{\forall b: b \in y \leftrightarrow (b \in X \wedge \varphi)})
  \\
      A \times B
    &\abbreviates\
      (\riota x: \forall p: p\in x \leftrightarrow (\exists c\in A, d\in B : p = ({c},{d})))
  \end{array}
$$%
These abbreviations are defined if their arguments are defined due to the axioms.

\jbwlongnote{In {\zfp} $\pow((\emptyset,\emptyset))=\emptyset$ holds.
  This is probably not what we want for a practical version of the system.}

% in an earlier version: In {\zfp} $\pow((\emptyset,\emptyset))$ is undefined for the weird reason
% that there is not a unique object with 0 set members.

\begin{definition}
  The axioms of {\zfp} are all the instances of the following formulas for every formula
  $\varphi$ with free variables at most $a$, $b$, $c_1$, $c_2$.
  \begin{itemize}
  \item \textbf{Sets}:
    \begin{enumerate}[label=\textbf{S\arabic*.}, ref=S\arabic*]
    \item Set Extensionality:
    $\allS x,y: (\forall a: a\in x \leftrightarrow a\in y) \rightarrow x=y$ \label{set_ext}

    \item Union:
    $\allS x: \exists y: \forall a: a\in y \leftrightarrow (\exists z\in x : a\in z)$ \label{union}

    \item Power Set:
    $\allS x:\exists y:\forall z: z\in y \leftrightarrow z\subseteq x$ \label{power}

    \item Infinity (ugly version):
      \(
        \exists y:        (\bind{\exS}{z\in y}
                             {\bind{\forall}{b}{b\notin z}})
                   \wedge\penalty0\relax
                          (\bind{\forall}{x\in y}
                             {\bind{\exists}{s\in y}
                                {\bind{\forall}{c}
                                   {c\in s \leftrightarrow (c\in x \vee c=x)}}})
      \).%
      \label{inf}%
      \jbwlongnote{The set witnessing Infinity can contain ordered pairs and for each such ordered
        pair $Q$ will contain also $\set{Q}$.}

    \item Replacement:\\
    $\forall c_1, c_2, x: (\forall a\in x : \exists!b: \varphi) \rightarrow (\exS y: \forall b: b\in y \leftrightarrow \exists a\in x: \varphi)$
    \label{rep}

    \item Foundation:
    $\allS x: x = \emptyset \vee (\exists a\in x:\neg \exists b\in x: b\pl a\vee b\pr a \vee b\in a)$
    \label{found}
    \end{enumerate}
  \item \textbf{Ordered Pairs:}
    \begin{enumerate}[label=\textbf{P\arabic*.}, ref=P\arabic*]
    \item Ordered Pair Emptiness:
    $\allP p: \forall a: a\notin p$
    \label{ax:pair-empty}

    \item Ordered Pair Formation:
    $\forall a,b: \exists p: a\pl p \wedge b\pr p$
    \label{pair_form}

    \item Projection Both-Or-Neither:
    $\forall p: (\exists a: a\pl p) \leftrightarrow (\exists b: b\pr p)$
    \label{proj_ex}

    \item Projection Uniqueness:
    $\allP p: (\exists! a: a\pl p) \wedge (\exists! b: b\pr p)$
    \label{ax:pair-uniq}

    \item Ordered Pair Extensionality:
    \\ $\allP p,q: (\forall a: (a\pl p \leftrightarrow a\pl q) \wedge (a\pr p \leftrightarrow a\pr q)) \rightarrow p=q$
    \label{pair_ext}

    \end{enumerate}
  \end{itemize}
\end{definition}

\begin{lemma}
  \label{lem:zfp-derived-properties}
  For every formula $\varphi$ such that any free variable must be $a$, the following hold in {\zfp}:
  \begin{enumerate}
  \item Unordered/Set Pairing:
    $\forall a,b: \exists x: \forall c: c\in x \leftrightarrow (c=a \vee c=b)$ \label{set_pair}
    \jbwshortnote{Proven that pairing and specification are derivable?}

  \item Specification:
    $\allS x: \exS y: \forall a: a\in y \leftrightarrow (a\in x \wedge \varphi))$ \label{spec}

  \item
    \label{cart_prod}
    Cartesian Product Existence:\\
    \(
      \allS x,y: \exS z:\forall p:
          p \in z
        \leftrightarrow
          (\exists a \in x, b\in y : a\pl p \wedge b\pr p)
    \)
  \end{enumerate}
\end{lemma}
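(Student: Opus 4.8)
The plan is to obtain all three items by running the familiar {\zf} proofs, restricting quantifiers with $\Set$ wherever needed, and --- this being the one {\zfp}-specific concern --- checking in each case that an object handed back by an existential axiom really is a set rather than an ordered pair. The single tool for those checks is Ordered Pair Emptiness~(\ref{ax:pair-empty}): an ordered pair has no $\in$-members, so any object forced to contain at least one member must be a set, and whenever an object would have no members at all we can fall back on $\emptyset$, which Infinity~(\ref{inf}) supplies as a genuine set. Note too that the {\zf} abbreviations reused verbatim in {\zfp}, namely $\set{A,B}$, $\set{A}$, and $X\cup Y$, are written with a bare $\riota$, so their definedness rests on item~1; item~1 itself needs none of the others, so I would present the items in the order item~1, item~2, item~3.

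For item~1 I would imitate the {\zf} derivation of Pairing from Replacement, using only the {\zfp} axioms directly. Since $\emptyset$ is defined (from~\ref{inf}), $\pow(\emptyset)$ is defined: Power Set~(\ref{power}) applied to $\emptyset$ returns an object having $z$ as a member iff $z\subseteq\emptyset$, which after unwinding $\subseteq$ holds iff $z=\emptyset$; this object has $\emptyset$ as a member, hence is a set by~\ref{ax:pair-empty} and is the unique such set by Set Extensionality~(\ref{set_ext}), so $\pow(\emptyset)$ is defined. Iterating once more, $\pow(\pow(\emptyset))$ is a set with exactly the two distinct members $\emptyset$ and $\pow(\emptyset)$. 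Given $a$ and $b$, apply Replacement~(\ref{rep}) to $\pow(\pow(\emptyset))$ with the formula $(a=\emptyset\wedge b=c_1)\vee(a=\pow(\emptyset)\wedge b=c_2)$, which is single-valued on that domain, and with parameters $c_1=a$, $c_2=b$; this returns a set whose members are exactly $a$ and $b$, which is the claim. Item~2 is the standard reduction of Specification to Replacement: fix a set $x$ and the formula $\varphi$; if no member of $x$ satisfies $\varphi$ take the witness to be $\emptyset$, and otherwise pick $a_0\in x$ with $\varphi(a_0)$ and apply Replacement to $x$ with the formula $(\varphi\wedge b=a)\vee(\neg\varphi\wedge b=c_1)$, single-valued on $x$, and with parameter $c_1=a_0$; the image of $x$ under the associated map is just the set of members of $x$ satisfying $\varphi$ (the stray value $a_0$ is harmless, being itself such a member). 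In both cases the conclusion of~\ref{rep} already delivers the witness as a set, so there is no further check to make.

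For item~3 I would build $x\times y$ by two nested applications of Replacement followed by a union, mirroring the {\zf} construction sketched in \autoref{sec:ordered-pairs-zf} but using the primitive pair $(a,b)$ in place of the Kuratowski pair. First, for any $a$ and $b$ there is a unique $p$ with $a\pl p\wedge b\pr p$: existence is Ordered Pair Formation~(\ref{pair_form}), and uniqueness follows from Projection Uniqueness~(\ref{ax:pair-uniq}) together with Ordered Pair Extensionality~(\ref{pair_ext}). For each $a\in x$, one application of Replacement --- with $a$ as a parameter, sending each $b\in y$ to that unique $p$ --- returns a set $R_a$ whose members are exactly the $p$ with $\exists b\in y: a\pl p\wedge b\pr p$. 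A second application of Replacement over $x$, sending $a$ to $R_a$ (single-valued by Set Extensionality), returns the set $\mathcal R$ of all these rows. Finally put $z\abbreviates\Union\mathcal R$; this is defined because a set with the defining property of $\Union\mathcal R$ always exists --- when the union is nonempty, the witness from Union~(\ref{union}) has a member and so is a set, and when it is empty, $\emptyset$ works. Then $p\in z$ iff $\exists a\in x: p\in R_a$ iff $\exists a\in x, b\in y: a\pl p\wedge b\pr p$, and $z$ is a set, which is item~3.

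Item~3 has the most moving parts, but the real obstacle lies elsewhere: it is the pervasive bookkeeping of definedness and sethood. Each use of a {\zf}-style axiom whose conclusion is a plain existential, and each use of one of the reused $\riota$-abbreviations, obliges one to confirm the witness is a set (via Ordered Pair Emptiness) or to fall back on $\emptyset$ in the empty case, and one must also ensure item~1 is in hand before any abbreviation it licenses gets used. The mathematical substance here is the familiar {\zf} development; getting that bookkeeping exactly right is where the care goes.
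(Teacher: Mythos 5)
Your proposal is correct and follows essentially the route the paper intends: the paper gives no detailed proof of this lemma beyond remarking that item~3 is obtained by the same nested-Replacement construction given for {\zf} in \autoref{sec:ordered-pairs-zf}, which is exactly what you do, and items~1 and~2 are the standard {\zf} derivations from Replacement adapted with precisely the sethood/definedness checks (Ordered Pair Emptiness, fallback to $\emptyset$, Set Extensionality for uniqueness) that you describe. The one detail to make explicit is that the outer Replacement in item~3 should use a functional formula that requires $\Set(b)$ (equivalently, define $R_a$ by a $\unS$ description), since when $y=\emptyset$ an ordered pair also vacuously satisfies the membership biconditional and uniqueness would otherwise fail --- but this is exactly the kind of bookkeeping you already flag.
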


For \autoref{lem:zfp-derived-properties}~(\ref{cart_prod}), note that the cartesian product
$A\times{B}$ can be built in {\zfp} using the same construction given for {\zf} in
\autoref{sec:ordered-pairs-zf}, which does not depend on any set representation of ordered pairs.
% $$
%     A \times B
%   = \Union\AllSuchThat{z}
%             {\Exists {c\in A} z = \AllSuchThat{p}{\Exists{d\in B} p = (c,d)}}
% $$

\subsection{Discussion}

\subsubsection{Axioms for Sets.}

Each {\zf} axiom was transformed to make a {\zfp} axiom.
First, because we use abbreviations for more readable axioms, those used in axioms needed to be
modified for {\zfp}.
The definition of $\subseteq$ (used in Power Set) was changed to ensure an ordered pair is neither a
subset nor has a subset%
%, so that it would make intuitive sense
.
% However, the {\zf} definition of $A \subseteq B$ is not fit for purpose.
% We must ensure that the terms $A,B$ are sets, otherwise we will have that $\emptyset \subseteq p$ and $p \subseteq \emptyset$ for any pair $p$.
% This is because ordered pairs are empty, but not equal to the empty set.
% Furthermore, the noun `subset' conveys that $z$ and $x$ are sets, so we redefine as follows:
% $$z \subseteq x \abbreviates \Set(z) \wedge \Set(x) \wedge \forall a: a \in z \rightarrow a \in x$$
The definition of $\emptyset$ (used in Foundation) was changed to ensure a defined result.

Second, some occurrences of $(\bind{\forall}{b}{\psi})$ and $(\bind{\exists}{b}{\psi})$ needed to
enforce that $\psi$ can be true only when $b$ stands for a set.
Where needed, such occurrences were changed to $(\bind{\allS}{b}{\psi})$ respectively
$(\bind{\exS}{b}{\psi})$.
Each quantifier needed individual consideration.
% One reason this was needed was to avoid confusing an ordered pair (which has 0 set members) with the
% empty set.
If the sethood of $b$ was already enforced by $\psi$ only being true when $b$ has at least 1 set
member, there was no need for a change but a change might also clarify the axiom.
If the truth of $\psi$ was unaffected by any set members of $b$, there was no need for a change and
this generally indicated that a change would go against the axiom's intention.
% $b$ appears to the right of $\in$ in $\psi$ (not including the subformula $\varphi$ in Replacement)
We needed to understand the axiom's \emph{intention} and \emph{expected usage} because it was not
written to specify where it is expected that `$X$ is a set' (because this always holds in {\zf}).

Finally, Foundation was extended to enforce a policy of no infinite descending chains through not
just $\in$ but also $\pl$ and $\pr$, so that {\zf} proofs using Kuratowski ordered pairs (having no such
chains) would continue to work in {\zfp}.

\iffalse

For example, in Set Extensionality~(\ref{set_ext}):
$$\allS x,y: (\forall a: a \in x \leftrightarrow a \in y) \rightarrow x=y$$
The quantifier binding $x$ and $y$ is restricted to sets.
Otherwise, all ordered pairs would be considered equal to the empty set, and therefore themselves,
because $a \in p$ is always false if $p$ is an ordered pair.
The quantifier binding $a$ must not be restricted, because we want to consider sets equal if they contain exactly the same elements, not just the same sets.
% However, when we consider the \textbf{axiom of pairing}~\ref{set_pair}, it does not need any restriction:
% $$\forall a,b: \exists x: \forall c: c \in x \leftrightarrow (c = a \vee c = b)$$
% Clearly we do not restrict the quantifier binding $a$ and $b$.
% Also, the quantifier binding $x$ can either be restricted or left alone.
% Regardless of this, any such $x$ will be a set, because the axiom asserts that $a$ and $b$ are always members of $x$.

Union~(\ref{union}) states the existence of a set $y$ that contains all of the members of the sets contained in $x$.
The corresponding axiom of {\zf} exploits the fact that $x$ is a set, and any members of it are sets too.
$$\allS x: \exists y: \forall a: a\in y \leftrightarrow (\exists z: z\in x \wedge a\in z)$$
The quantifier binding $x$ is restricted to sets, because taking the union of an ordered pair has no intuitive meaning.
The quantifier binding $y$ does not need to be restricted, as the binder in the description for
$\Union A$ is restricted instead.
If $x$ is a non-empty set containing at least non-empty set, then $y$ will always be a non-empty set.
By Extensionality, this set is unique, and therefore the term $\Union x$ has a value.
If $x$ is empty or only contains ordered pairs, then $\forall a: a\notin y$.
There are many individuals satisfying this, so we must use the restriction so that $\Union x = \emptyset$.

\fi

Consider the example of Power Set
%~(\ref{power})
which states that for any set $X$ there exists a set $Y$ containing all of the subsets of $X$ and
nothing else, i.e., $\pow(X)$:
$$
  \allS x: \exists y: \forall z: (z\in y \leftrightarrow z \subseteq x)
$$
We could have left $\allS x$ as $\forall x$, because when $x$ is an ordered pair it would act like
$\emptyset$ and this would only add another reason that $\pow(\emptyset)$ exists.
However, we thought this would be obscure.
It would not hurt to change $\exists y$ to $\exS y$ but there is no need to do so because the body
forces $y$ to contain a set member and hence rejects $y$ being an ordered pair.
We did not change $\forall z$ to $\allS z$ because this would allow $y$ to contain extra junk
ordered pairs that proofs expecting to get $\pow(x)$ would have to do extra work using Replacement
to filter out.

\subsubsection{Axioms for Ordered Pairs.}

The {\zfp} axioms for ordered pairs specify the abstract properties of ordered pairs via the
relations $\pl$ and $\pr$.
These ordered pairs have no `type' restrictions, i.e., each pair projection can be either a set or an ordered pair.
Ordered Pair Emptiness~(\ref{ax:pair-empty}) ensures that no object has both a projection (ordered pairs
only) and a set member (sets only).
Ordered Pair Formation~(\ref{pair_form}) ensures that for every two objects $b$ and $c$ there exists
an ordered pair with $b$ as first projection and $c$ as second.
Projection Both-Or-Neither~(\ref{proj_ex}) ensures that every object either has no projections
(sets) or both projections (ordered pairs).
% This rules out objects that have a first projection, but not a second, or vice versa.
Projection Uniqueness~(\ref{ax:pair-uniq}) ensures each ordered pair has exactly one first projection and
one second projection.
% Because $\pl$ and $\pr$ are predicate symbols, without this pairs could have many first
% projections, or many second projections.
Ordered Pair Extensionality~(\ref{pair_ext}) ensures that for every choice of first and second
projections, there is exactly one ordered pair.

\subsubsection{Comparing the Objects and Theorems of {\zf} and {\zfp}.}

A set is \emph{pure} iff all its members are pure sets.
Each {\zf} object is a pure set and is also a pure set of {\zfp}, but {\zfp} has additional impure
sets which have members that are primitive ordered pairs or impure sets, and {\zfp} also has
primitive ordered pairs.
The set membership relation $\in$ of {\zf} is the restriction of the relation $\in$ of {\zfp} to
pure sets.
Let $\mathsf{Pure}(x)$ be a formula (implemented with transfinite recursion) that holds in {\zfp}
when $x$ is a pure set.
For every {\zf} formula $\varphi$, let $\mathsf{PRestrict}(\varphi)$ be the {\zfp} formula obtained
from $\varphi$ by changing each subformula $(\bind{\forall}{x}{\psi})$ to
$(\bind{\forall}{x}{\mathsf{Pure}(x)\to\psi})$.
Then $\varphi$ is a {\zf} theorem iff $\mathsf{PRestrict}(\varphi)$ is a {\zfp} theorem.
If one wants to go the other direction and take a {\zfp} formula $\psi$ and find a {\zf} formula $\psi'$ that `does the same
thing', one must represent as {\zf} sets both (1)~the primitive ordered pairs \emph{and} (2)~the
sets of {\zfp}, and then one must either prevent or somehow manage the possible confusion between
the representations of (1) and~(2).
\Autoref{sec:interpreting-zfp-zf} is an example of doing this rigorously.

\subsubsection{Design Alternatives.}

We considered having the projections $\pl$ and $\pr$ be unary FOL function symbols, but this would
require the term ${\pl}(x)$ to denote an object within the domain of discourse for every set $x$, so
we avoided this.
We considered having the pairing operator $(\cdot, \cdot)$ be a binary FOL function symbol.
Using a binary function symbol would mean the graph model would have hyperedges
(i.e., connecting 3 or more nodes) which is more difficult to think about.
Because we used two separate binary predicate symbols, one for each projection, we get a fairly
standard-looking directed-graph
model with ordinary edges.
If we used a binary FOL function symbol $(\cdot, \cdot)$ for pairing, we could replace our axioms
\ref{pair_form}, \ref{proj_ex}, \ref{ax:pair-uniq}, and \ref{pair_ext} by the characteristic
property of ordered pairs:
$$%
  \bind{\forall}{a,b,c,d}{(a,b) = (c,d) \rightarrow (a=b \wedge c=d)}
$$%
Our axioms can be seen as the result of applying a function-symbol-elimination transformation to
this alternative.

Very early on, we considered simply using {\zf}'s axioms as they are, adding a binary pairing
function symbol, and adding the characteristic property of ordered pairs as an axiom.
In this theory, formulas such as $\set{b} \in \pair{b}{c}$ would be independent, because the
representation of ordered pairs would be unknown (and need not even be definable in ZF), so some
`junk theorems' would no longer hold.
We avoided this alternative for many reasons.
First, Extensionality would force all but one ordered pair (which could be $\emptyset$) to have set
members, so there would be `junk theorems' such as
\(
  (a,b) \ne (c,d) \rightarrow \bind{\exists}{e}{e \in (a,b) \leftrightarrow e \notin (c,d)}
\).
\relax
Second, we could not see how to do transfinite induction and recursion.
Third, genuine non-sets make it easier to talk about the distinction between sets and conceptually
non-set objects, e.g., to students.
Fourth, we hope our approach might help a weak form of `type checking', where a prover might more
quickly solve or disprove subgoals, and if a user mistakenly requires a non-set to have a set
member, this might be detected earlier and result in a more understandable failure message.
Some further reasons are discussed in \autoref{sec:introduction}.

\section{A Model of ZFP}\label{sec:zfp-model}

We define within {\zf} a model for {\zfp}, i.e., an interpretation of the domain and predicate
symbols of {\zfp}.
A translation from a ZFP formula $\psi$ to a ZF formula $\psi^*$ is defined to interpret ZFP
formulas
in the model.
Terms and formulas in this section belong to {\zf} except for the arguments of $(\:\cdot\:)^*$.
All axioms of {\zfp} hold under this translation, which implies that if {\zf} is consistent,
so is {\zfp} \cite{enderton}.
That each axiom's translation holds has been checked in Isabelle/{\zf}.

\subsection{The Cumulative Hierarchy W}

Like the Von Neumann universe {\V} used as the domain of a model of {\zf}, our domain
{\W} is a set hierarchy indexed by ordinal numbers.

An \emph{ordinal} is a \emph{transitive} set that is totally ordered by $\in$, which we specify
formally by
\(
    \Ord(x)
  \abbreviates
      (\bind{\forall}{y\in x}{y\subseteq x})
    \wedge
      (\bind{\forall}{y,z\in x}{y=z \vee y\in z \vee z\in y})
    %\wedge
    %  X=X
\)%
\relax
\jbwlongnote{The definition of $\Ord$ needs a condition $X=X$ to be safely used on definite descriptions.}%
.
Let $\alpha$ and $\beta$ range over ordinals.
Let $0\abbreviates\emptyset$, $1\abbreviates0^+$, $2\abbreviates1^+$, and so on.
Ordinal $\beta$ is a \emph{successor} ordinal
iff $\beta=\alpha^+$ for some $\alpha$.
Ordinal $\beta$ is a \emph{limit} ordinal iff $\beta$ is neither $0$ nor a successor ordinal.
Let $\lambda$ range over limit ordinals.
Let $(x<y)\abbreviates(x\in y\wedge\Ord(y))$ and define related symbols (e.g., $\le$) as usual.

% \begin{definition}
%   For any ordinal $\alpha$, the set $V_\alpha$ is defined via transfinite recursion as follows:
%   \begin{align*}
%     V_0 &\abbreviates \emptyset \\
%     V_{\beta^+} &\abbreviates \pow(V_\beta) &\mbox{ where $\beta^+$ is a successor ordinal.} \\
%     V_{\lambda} &\abbreviates  \bigcup_{\beta \in \lambda} V_\beta  \hspace{0.1in}& \mbox{ where $\lambda$ is a limit ordinal.}
%   \end{align*}
% \end{definition}

Any model of ZFP must have some way of distinguishing between the  objects in its domain representing ZFP
sets, and those that represent ZFP pairs, i.e., {\zfp} needs a domain split into two disjoint subdomains.
We model this in {\zf} using Kuratowski ordered pairs and cartesian products to tag all domain
objects with $0$ (`set') or $1$ (`ordered pair').

\begin{definition}
  For ordinal $\alpha$, define the set\/ $W_\alpha$ via transfinite recursion thus:
  $$
    W_0 = \emptyset
  ,\qquad
    W_{\beta^+} = (\{0\} \times \pow(W_\beta)) \cup (\{1\} \times (W_\beta)^2)
    % \mbox{ where $\beta^+$ is a successor ordinal}
  ,\qquad
    W_{\lambda} =  \bigcup_{\beta \in \lambda} W_\beta
    % \mbox{ where $\lambda$ is a limit ordinal.}
  $$
\end{definition}

%description
Starting from $\emptyset$, each successor tier $W_{\beta^+}$ is built by taking the disjoint union
of the power set and cartesian square of the previous tier.
Each limit tier $W_\lambda$ is the union of all preceding tiers.
The use of disjoint union to build each successor tier $W_{\beta^+}$ gives a set-theoretic universe
split into two.
Although our disjoint union uses Kuratowski pairs with 0 and 1 tags, we could use instead any two
definable injective operators from a large enough class (e.g., the universe) to disjoint classes that raise
rank by at most a constant.

Let \W{} be the proper class such that $x\in\W$ iff $x\in W_\alpha$ for some $\alpha$.
We use a bold upright serif font to emphasize that \W~is not a {\zf} set.%
\footnote{\W~is a
  mathematical object in some other set theories.}
By the transfinite recursion theorem, given $x$ there is a definite description $\mathsf{W}(x)$
that evaluates to $W_\alpha$ when $x$ evaluates to $\alpha$.%
\footnote{A nested definite description is used
  that specifies the function $f$ such that $f(\beta)=W_\beta$ for $\beta\leq\alpha$, i.e., $f$ is
  an initial prefix of the hierarchy.
  Then $f(\alpha)$ is returned.}
We express $X$ belonging to \W{} as follows:

\begin{definition}
  $\mathcal{H}(X) \abbreviates (\bind{\exists}{y}{\Ord(y) \wedge X\in \mathsf{W}(y)})$.
\end{definition}

Let an \emph{m-object} be any member of {\W} (i.e., a {\zf} set $x$ such that $\mathcal{H}(x)$
holds), an \emph{m-set} be any m-object of the form $\pair{0}{x}$, and an \emph{m-pair} be any
m-object of the form $\pair{1}{x}$.
The following result says every m-object $x$ is either an m-set or an m-pair, and tells where in
the hierarchy the contents of $x$ are.

\begin{lemma}\label{thm:zset-or-zpair}
  Suppose $\mathcal{H}(x)$, so that $x\in W_\alpha$.
  Then for some $\beta<\alpha$ either:
  $$
      x = \pair{0}{x'} \text{ where $x'\subseteq W_\beta$,}
   \quad\text{or}\quad
      x = \pair{1}{\pair{a}{b}} \text{ where $a,b \in W_\beta$.}
  $$
  % adds no information: Moreover, if $\alpha = \gamma^+$ for some ordinal $\gamma$, then $\beta = \gamma$.
\end{lemma}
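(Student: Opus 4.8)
The plan is to prove this by transfinite induction on $\alpha$ (equivalently: let $\delta$ be the least ordinal with $x\in W_\delta$ and argue about its shape). The hypothesis $\mathcal{H}(x)$ with $x\in W_\alpha$ and the base clause $W_0=\emptyset$ immediately rule out $\alpha=0$, so $\alpha$ is either a successor or a limit, and I would split into these two cases.

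For the successor case $\alpha=\succ{\gamma}$, I would simply unfold the recursion clause
$$
  W_{\succ{\gamma}} = (\{0\}\times\pow(W_\gamma))\cup(\{1\}\times(W_\gamma)^2).
$$
If $x$ lies in the left summand, then by the definition of the cartesian product (elements of $A\times B$ are exactly the Kuratowski pairs $\pair{c}{d}$ with $c\in A$, $d\in B$) and the fact that $\{0\}$ has unique element $0$, we get $x=\pair{0}{x'}$ with $x'\in\pow(W_\gamma)$, i.e.\ $x'\subseteq W_\gamma$. If $x$ lies in the right summand, then $x=\pair{1}{p}$ with $p\in(W_\gamma)^2=W_\gamma\times W_\gamma$, so again by the product definition $p=\pair{a}{b}$ with $a,b\in W_\gamma$. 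In both subcases $\beta\abbreviates\gamma$ witnesses the claim, since $\gamma<\succ{\gamma}=\alpha$.

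For the limit case $\alpha=\lambda$, the clause $W_\lambda=\bigcup_{\gamma\in\lambda}W_\gamma$ gives $x\in W_\gamma$ for some $\gamma<\lambda$, and the induction hypothesis at $\gamma$ supplies a $\beta<\gamma<\alpha$ of the required form. This limit case is the only place any care is needed: a member of $W_\lambda$ is inherited from some earlier tier which may itself be a limit tier, so one cannot read off the shape of $x$ directly from the defining clause for $W_\lambda$ — one must recurse (or, in the least-ordinal formulation, observe that the least $\delta$ with $x\in W_\delta$ cannot be a limit, hence is a successor, reducing to the successor case). Everything else is routine unwinding of the definitions of $W_{\succ{\gamma}}$, Kuratowski pairs, cartesian products, and the ordinal order, so I do not anticipate a genuine obstacle.
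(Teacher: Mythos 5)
Your proposal is correct: the transfinite induction on $\alpha$ (equivalently, taking the least tier containing $x$), with the successor case read off directly from the recursion clause for $W_{\beta^+}$ and the limit case handed back to the induction hypothesis, is exactly the standard argument, and it is the one underlying the paper's (machine-checked, not spelled out in the text) proof of this lemma. No gaps: the $\alpha=0$ case is vacuous, and your handling of the limit-tier subtlety is the right one.
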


It holds that {\W} is a cumulative hierarchy:

% Often we begin proofs by assuming $\mathcal{H}(x)$ and thus $x \in W_\alpha$.
% The absence of a predecessor operation makes it difficult to reason about the contents of $x$, which belong to tiers below $W_\alpha$.
% Part (1) of the following lemma allows us to move our reference point upwards.
% Part (2), which follows from (1), is used to show that for any m-objects there is a tier in which both are contained.

\begin{lemma}\label{lem:W-subset}
  % \cmdshortnote{lemma needs more explanation}
  % Let $\alpha$, $\beta$ be ordinals. Then we have that:
  % \begin{enumerate}
  %   \item
  %     $W_\alpha \subseteq W_{\alpha^+}$
  %   \item
        If $\alpha \leq \beta$, then $W_\alpha \subseteq W_\beta$.
  % \end{enumerate}
\end{lemma}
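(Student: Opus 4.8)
The plan is to prove the statement by strong transfinite induction on $\beta$. For each ordinal $\beta$ I will establish the claim $P(\beta)$: for every $\alpha < \beta$, $W_\alpha \subseteq W_\beta$. The lemma then follows, since the missing case $\alpha = \beta$ is trivial ($W_\beta \subseteq W_\beta$), and $\alpha \le \beta$ splits into $\alpha < \beta$ and $\alpha = \beta$. When proving $P(\beta)$ I may assume $P(\gamma)$ for all $\gamma < \beta$.

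The base case $\beta = 0$ is vacuous, as no ordinal is $< 0$. The limit case $\beta = \lambda$ is immediate from the definition of $W_\lambda$: if $\alpha < \lambda$ then $W_\alpha$ is one of the sets whose union is $W_\lambda = \bigcup_{\gamma \in \lambda} W_\gamma$, so $W_\alpha \subseteq W_\lambda$ (the induction hypothesis is not needed here). The successor case $\beta = \gamma^+$ is where the content lies. If $\alpha < \gamma^+$ with $\alpha < \gamma$, then $P(\gamma)$ gives $W_\alpha \subseteq W_\gamma$, so it suffices to prove the one-step inclusion $W_\gamma \subseteq W_{\gamma^+}$ (this also covers the remaining subcase $\alpha = \gamma$). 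To prove it, take any $x \in W_\gamma$. Then $\mathcal{H}(x)$ holds, witnessed by $\gamma$, since $\mathsf{W}(\gamma) = W_\gamma$; so \autoref{thm:zset-or-zpair}, applied at level $\gamma$, yields some $\delta < \gamma$ with either $x = \pair{0}{x'}$ and $x' \subseteq W_\delta$, or $x = \pair{1}{\pair{a}{b}}$ and $a, b \in W_\delta$. Since $\delta < \gamma$, the hypothesis $P(\gamma)$ gives $W_\delta \subseteq W_\gamma$. In the first subcase $x' \subseteq W_\gamma$, hence $x' \in \pow(W_\gamma)$ and $x = \pair{0}{x'} \in \{0\} \times \pow(W_\gamma) \subseteq W_{\gamma^+}$. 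In the second subcase $a, b \in W_\gamma$, hence $\pair{a}{b} \in (W_\gamma)^2$ and $x = \pair{1}{\pair{a}{b}} \in \{1\} \times (W_\gamma)^2 \subseteq W_{\gamma^+}$. Either way $x \in W_{\gamma^+}$, which completes the induction.

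The main obstacle is the mild entanglement in the successor case: the one-step inclusion $W_\gamma \subseteq W_{\gamma^+}$ requires that the untagged contents of each m-object of $W_\gamma$ already lie in $W_\gamma$, but \autoref{thm:zset-or-zpair} only places those contents in some \emph{earlier} tier $W_\delta$, forcing us to invoke the lower-level monotonicity $W_\delta \subseteq W_\gamma$, which is itself part of what we are proving. Packaging the whole argument as a single strong transfinite induction on $\beta$ is exactly what resolves this. The structure parallels the classical proof that the Von Neumann hierarchy is cumulative, where the analogous step is the transitivity of each $V_\gamma$; the only extra bookkeeping here is the $0/1$ tagging, which \autoref{thm:zset-or-zpair} handles uniformly. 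The proof has been machine-checked in Isabelle/{\zf}.
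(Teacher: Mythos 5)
Your proof is correct: the strong transfinite induction on $\beta$, with the limit case read off the definition and the successor case reduced to $W_\gamma \subseteq W_{\gamma^+}$ via the decomposition in \autoref{thm:zset-or-zpair} plus the induction hypothesis $W_\delta \subseteq W_\gamma$, is exactly the natural argument (and \autoref{thm:zset-or-zpair} itself does not depend on this lemma, so there is no circularity). The paper states the lemma without an in-text proof, deferring to its Isabelle/ZF formalization, and your argument matches the approach that proof takes.
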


\subsection{Interpreting ZFP in ZF}
\label{sec:interpreting-zfp-zf}

As explained above, we interpret the sets and ordered pairs of {\zfp} as the members of {\W}.
\Autoref{thm:zset-or-zpair} says any m-object is an ordered pair whose left projection is
an integer which decides its `type' and whose right projection is either a set or an ordered
pair.
We define our interpretations of {\zfp}'s predicate symbols:

\begin{definition}\label{zf_rel}
  Let $\zin$, $\zpl$, and $\zpr$ be defined by these abbreviations:
  $$
    \begin{array}{@{}l@{\ \abbreviates\ }l@{}}
       a \zin x & (\bind{\exists}{y}{x = \pair{0}{y} \wedge a \in y})
    \\
       a \zpl p & (\bind{\exists}{u,v}{p = \pair{1}{\pair{u}{v}} \wedge a = u})
    \\
       a \zpr p & (\bind{\exists}{u,v}{p = \pair{1}{\pair{u}{v}} \wedge a = v})
    \end{array}
  $$
\end{definition}

{\W} is downward closed under these three relations\cmdshortnote{\emph{downwards closed} is a
  property of partially ordered sets, and here {\W} isn't even a math object, so unsure how to state
  this}.
That is:

\begin{lemma}\label{prev_tier_rel}
  Suppose $\mathcal{H}(x)$, i.e., $x\in W_\alpha$ for some $\alpha$.
  Suppose $a \zin x$, $a \zpl x$, or $a \zpr x$ for some $a$.
  Then $a\in W_\beta$ for some $\beta < \alpha$, and thus $\mathcal{H}(a)$.
  %
  % adds no information:
  % If $x \in W_{\alpha^+}$ for some ordinal $\alpha$, then $a\in W_\alpha$.
  % Otherwise, if $x \in W_\lambda$ for some limit ordinal $\lambda$, then $a \in W_\beta$ for some
  % $\beta < \lambda$.
\end{lemma}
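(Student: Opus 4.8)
The plan is to reduce the claim to \autoref{thm:zset-or-zpair}, which already pins down the shape of any m-object $x$ together with the tier containing its `contents'; the only extra ingredient needed is the characteristic property of Kuratowski pairs, which lets us read those contents off uniquely from the shape.

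First I would apply \autoref{thm:zset-or-zpair} to $\mathcal{H}(x)$, obtaining an ordinal $\beta<\alpha$ (an ordinal because it is a member of the ordinal $\alpha$) such that either $x=\pair{0}{x'}$ with $x'\subseteq W_\beta$, or $x=\pair{1}{\pair{c}{d}}$ with $c,d\in W_\beta$. I would then split on these two cases and, in each, determine which of $a\zin x$, $a\zpl x$, $a\zpr x$ can hold by unfolding \autoref{zf_rel}, recalling that the hypothesis only asserts that one of the three holds.

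In the m-set case $x=\pair{0}{x'}$: each of $a\zpl x$ and $a\zpr x$ would require $x=\pair{1}{\pair{u}{v}}$, which is impossible since $\pair{0}{x'}=\pair{1}{\pair{u}{v}}$ forces $0=1$; so $a\zin x$ must be the case, giving some $y$ with $x=\pair{0}{y}$ and $a\in y$, whence $y=x'$ by injectivity of Kuratowski pairing and therefore $a\in x'\subseteq W_\beta$. In the m-pair case $x=\pair{1}{\pair{c}{d}}$: here $a\zin x$ is impossible (it would force $1=0$), while $a\zpl x$ supplies $u,v$ with $\pair{1}{\pair{u}{v}}=\pair{1}{\pair{c}{d}}$ and $a=u$; two uses of the characteristic property give $u=c$, so $a=c\in W_\beta$, and symmetrically $a\zpr x$ gives $a=d\in W_\beta$. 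In every case $a\in W_\beta$ with $\beta<\alpha$, so $\beta$ witnesses $\mathcal{H}(a)$.

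I do not expect a genuine obstacle: the argument is a short case analysis that needs neither \autoref{lem:W-subset} nor any further machinery. The only points requiring a little care are the bookkeeping of the Kuratowski-pair characteristic property (including the degenerate identity $\pair{a}{a}=\{\{a\}\}$ noted in \autoref{sec:ordered-pairs-zf}) and checking that the witness $\beta$ really is an ordinal so that the existential in $\mathcal{H}(a)$ is satisfied; a formalization-level wrinkle, already dealt with in the Isabelle development, is that $\mathsf{W}(\cdot)$ is a definite description, but since \autoref{thm:zset-or-zpair} is stated directly in terms of $W_\alpha$ and $W_\beta$ this is immaterial to the mathematical argument.
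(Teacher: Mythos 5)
Your proposal is correct and is essentially the intended argument: the paper prints no proof of \autoref{prev_tier_rel} (deferring to the Isabelle/ZF check), but its placement immediately after \autoref{thm:zset-or-zpair} and \autoref{zf_rel} makes clear that the intended route is exactly your case split on the tag $0$ versus $1$ given by \autoref{thm:zset-or-zpair}, with the Kuratowski characteristic property used to identify the contents and place $a$ in $W_\beta$ for the same $\beta<\alpha$. Your side remarks (the impossibility of the mismatched-tag cases, $\beta$ being an ordinal, and the harmlessness of the degenerate pair $\pair{a}{a}$) are accurate and need no further machinery.
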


To interpret a {\zfp} formula $\varphi$ in {\zf}, we must show the formula holds when
quantification is restricted to the domain {\W}, and the predicate symbols are replaced by the
interpretations defined above.
%For this reason, we define the following translation on {\zfp} formulas.

\begin{definition}\label{def:translation}
  Let $\varphi$ be a {\zfp} formula.
  Define $\varphi^*$ recursively as follows:%
  $$%
    \begin{array}{@{}l@{\ \abbreviates\ }l@{\qquad}l@{\ \abbreviates\ }l@{}}
        (X \in Y)^* &  (X^*) \zin (Y^*)
      &
        (\varphi \rightarrow \psi)^* & (\varphi^*) \rightarrow (\psi^*)
    \\
        (X \pl Y)^* & (X^*) \zpl (Y^*)
      &
        (\neg \varphi)^* & \neg (\varphi^*)
    \\
        (X \pr Y)^* & (X^*) \zpr (Y^*)
      &
        (\forall x: \varphi)^* & (\bind{\forall}{x}{\mathcal{H}(x) \rightarrow (\varphi^*)})
    \\
        x^* & x
      &
        (\defdes{x}{\varphi})^* & (\defdes{x}{\mathcal{H}(x) \wedge (\varphi^*))}
    \end{array}
  $$%
\end{definition}

\begin{lemma}\label{lem:translation-existential}
  \(
      (\bind{\exists}{x}{\varphi})^*
    % =
    %   (\neg\bind{\forall}{x}{\mathcal{H}(x)\rightarrow \neg (\varphi^*)})
    \leftrightarrow
      (\bind{\exists}{x}{\mathcal{H}(x)\wedge (\varphi^*)})
  \).
\end{lemma}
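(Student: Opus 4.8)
The plan is to reduce the statement to the recursive clauses of \autoref{def:translation} by unfolding the abbreviation for $\exists$. Recall that in the FOL of this paper the only primitive quantifier and propositional connectives are $\forall$, $\neg$, and $\rightarrow$, with $(\exists x:\varphi)$ being an abbreviation for $(\neg\forall x:\neg\varphi)$ and $(\psi\wedge\chi)$ an abbreviation for $(\neg(\psi\rightarrow\neg\chi))$. Since $(\,\cdot\,)^*$ is \emph{defined} by recursion on the primitive syntax, $(\exists x:\varphi)^*$ means, by definition, $(\neg\forall x:\neg\varphi)^*$. So there is nothing to prove about the left-hand side beyond carefully pushing $(\,\cdot\,)^*$ through this expansion.

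First I would compute $(\neg\forall x:\neg\varphi)^*$ by applying the clauses of \autoref{def:translation} in order: the $\neg$-clause gives $\neg\bigl((\forall x:\neg\varphi)^*\bigr)$; the $\forall$-clause gives $\neg\bigl(\forall x:\mathcal{H}(x)\rightarrow(\neg\varphi)^*\bigr)$; and the $\neg$-clause again gives $\neg\bigl(\forall x:\mathcal{H}(x)\rightarrow\neg(\varphi^*)\bigr)$. Second, it remains to observe that this {\zf} formula is classically equivalent to $(\exists x:\mathcal{H}(x)\wedge(\varphi^*))$: indeed $\neg(\forall x:\mathcal{H}(x)\rightarrow\neg(\varphi^*))$ is $\exists x:\neg(\mathcal{H}(x)\rightarrow\neg(\varphi^*))$, and $\neg(\mathcal{H}(x)\rightarrow\neg(\varphi^*))$ is $\mathcal{H}(x)\wedge(\varphi^*)$, both steps being pure propositional and first-order logic in the target language {\zf}. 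No set-theoretic content and nothing about {\W} is needed.

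I do not expect any genuine obstacle here; the only point requiring care is to treat $\exists$ — and the $\wedge$ appearing in the statement — strictly as abbreviations, so that the recursive definition of $(\,\cdot\,)^*$ literally applies. In effect this lemma records that $(\,\cdot\,)^*$ commutes with the defined connective $\wedge$ and quantifier $\exists$ (up to logical equivalence), and the same one-line unfolding shows it commutes as well with $\exists!$ and with the bounded quantifiers $(\exists b\in X:\varphi)$ and $(\forall b\in X:\varphi)$. These facts are exactly what will be used repeatedly, without further comment, when checking that the translation of each {\zfp} axiom is a {\zf} theorem.
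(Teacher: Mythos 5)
Your proof is correct and is exactly the routine argument the paper intends (it omits the proof as immediate): since $\exists$ is only an abbreviation for $\neg\forall x:\neg(\cdot)$, one pushes $(\:\cdot\:)^*$ through the primitive clauses of \autoref{def:translation} and finishes with the classical equivalence $\neg(\forall x:\mathcal{H}(x)\rightarrow\neg(\varphi^*))\leftrightarrow(\exists x:\mathcal{H}(x)\wedge(\varphi^*))$, noting the result is a biconditional, not a syntactic identity, because of the double negation introduced when $\wedge$ is unfolded. No set-theoretic content is needed, just as you say.
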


Because the translation $(\:\cdot\:)^*$ inserts quite a lot of extra structure, a {\zfp} user
wanting to understand ``the {\zf} formula corresponding to the {\zfp} formula $\psi$'' might be
tempted to instead translate {\zfp}'s $\in$ directly to {\zf}'s $\in$ and {\zfp}'s $\pl$ and $\pr$
to the {\zf} abbreviations for $\pl$ and $\pr$ defined in \autoref{sec:ordered-pairs-zf}.
However, as discussed in \autoref{sec:issue-representation}, the user then would need to carefully
prove that no problems arise from the coincidences where a {\zfp} set $x$ and a {\zfp} primitive
ordered pair $p$ would be represented by the same {\zf} set $y$.

Observe that the {\zfp} abbreviations $\Set$ and $\Pair$ from \autoref{sec:defin-axioms-zfp} that
act like unary predicates are interpreted in {\zf} as follows:
$$%
    \Pair(x)^* \abbreviates (\bind{\exists}{a}{\mathcal{H}(a) \wedge a \zpl x})
  \qquad\qquad
    \Set(x)^* \abbreviates \neg(\Pair(x)^*)
$$%
These predicates are clearly meaningful within the model because:
\begin{lemma}\label{lem:u_set}
  Suppose that $\mathcal{H}(x)$, then we have that:
  $$%
      \Pair(x)^* \leftrightarrow (\bind{\exists}{a,b}{x = \pair{1}{\pair{a}{b}}})
    \qquad\qquad
      \Set(x)^*  \leftrightarrow (\bind{\exists}{y}{x = \pair{0}{y}})
  $$%
\end{lemma}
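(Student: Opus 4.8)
The plan is to unfold the abbreviations for $\Pair(x)^*$ and $\zpl$ and then read off both biconditionals from \autoref{thm:zset-or-zpair} together with the characteristic property of Kuratowski pairs --- in particular $0 \neq 1$, so that $\pair{0}{y}$ and $\pair{1}{\pair{a}{b}}$ are never equal. I would first prove the left biconditional, $\Pair(x)^* \leftrightarrow (\Exists{a,b}{x = \pair{1}{\pair{a}{b}}})$, and then obtain the $\Set$ biconditional almost for free by negating it, since $\Set(x)^*$ is by definition $\neg\Pair(x)^*$.

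For the left-to-right direction of the $\Pair$ case, assume $\Pair(x)^*$, i.e.\ there is an $a$ with $\mathcal{H}(a)$ and $a \zpl x$; unfolding $\zpl$ yields $u,v$ with $x = \pair{1}{\pair{u}{v}}$, which is exactly the right-hand side. For right-to-left, assume $x = \pair{1}{\pair{a}{b}}$. Then $a \zpl x$ holds, witnessed by taking $u\abbreviates a$ and $v\abbreviates b$, so to conclude $\Pair(x)^*$ it remains only to verify $\mathcal{H}(a)$; this is precisely what \autoref{prev_tier_rel} supplies, since $\mathcal{H}(x)$ and $a \zpl x$. (Alternatively, \autoref{thm:zset-or-zpair} applied to $x$ forces its second case here --- the first case $x = \pair{0}{x'}$ is incompatible with $x = \pair{1}{\pair{a}{b}}$ because $0\neq1$ --- and that case directly gives a $\beta$ with $a,b \in W_\beta$, hence $\mathcal{H}(a)$.)

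The $\Set$ biconditional then follows quickly: by the first part, $\Set(x)^* = \neg\Pair(x)^*$ is equivalent to $\neg(\Exists{a,b}{x = \pair{1}{\pair{a}{b}}})$. If $x = \pair{0}{y}$ this holds since $0\neq1$; conversely, if $x$ is not of the form $\pair{1}{\pair{a}{b}}$, then \autoref{thm:zset-or-zpair} applied to $\mathcal{H}(x)$ leaves only the case $x = \pair{0}{x'}$, which supplies the required $y$. The closest this proof comes to a genuine obstacle is bookkeeping about \autoref{thm:zset-or-zpair}'s disjunction: one must invoke the Kuratowski pair property to see the two cases are mutually exclusive, so that excluding one genuinely lands $x$ in the other, and one must not forget that $\Pair(x)^*$ demands a projection that is itself an m-object --- which is exactly why \autoref{prev_tier_rel} (rather than a bare unfolding of $\zpl$) is needed in the backward direction.
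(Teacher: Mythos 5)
Your proof is correct and follows exactly the route the paper's machinery is set up for: the paper states this lemma without a written proof (deferring to the Isabelle/ZF check), and its intended justification is precisely your combination of unfolding $\zpl$/$\zin$, using \autoref{prev_tier_rel} (or \autoref{thm:zset-or-zpair}) to supply the $\mathcal{H}(a)$ witness in the backward direction of the $\Pair$ case, and using \autoref{thm:zset-or-zpair} plus $0\neq1$ to settle the $\Set$ case. You also correctly flag the one genuinely non-trivial point, namely that $\Pair(x)^*$ requires the projection to be an m-object, which a bare unfolding of $\zpl$ does not give.
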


Now we reach our main result, which implies {\zfp} is consistent if {\zf} is~\cite{enderton}:

\begin{theorem}
  For each {\zfp} axiom $\varphi$, the translation $\varphi^*$ holds in {\zf}.
\end{theorem}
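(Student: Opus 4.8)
The plan is to verify each of the eleven {\zfp} axioms separately by unfolding the translation $(\:\cdot\:)^*$ and the interpreting relations $\zin$, $\zpl$, $\zpr$, then reasoning inside {\zf} about the cumulative hierarchy {\W}. The common toolkit consists of \autoref{thm:zset-or-zpair} (every m-object is $\pair{0}{x'}$ or $\pair{1}{\pair{a}{b}}$ with contents one tier down), \autoref{lem:W-subset} (cumulativity), \autoref{prev_tier_rel} (the three relations go strictly down the hierarchy), \autoref{lem:translation-existential} (existentials relativize to $\mathcal{H}$), and \autoref{lem:u_set} (the translated $\Set$/$\Pair$ predicates detect the $0$/$1$ tag). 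For each axiom I would first compute what $\varphi^*$ says concretely: all quantifiers become guarded by $\mathcal{H}$, atomic formulas become the tagged relations, and the {\zfp} abbreviations ($\subseteq$, $\emptyset$, $\pow$, $\Union$, separation, $(\cdot,\cdot)$) expand to their {\zf} analogues composed with the tagging.

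I would handle the set axioms \ref{set_ext}--\ref{rep} by the uniform device of ``stripping the tag.'' An m-set is $\pair{0}{y}$, and $a \zin \pair{0}{y} \leftrightarrow a \in y$; so a statement about m-sets in {\W} reduces, via $y \mapsto \pair{0}{y}$ and its inverse, to the corresponding {\zf} statement about the underlying sets $y$, which are themselves well-behaved because \autoref{thm:zset-or-zpair} places $y \subseteq W_\beta$. Concretely: Set Extensionality follows from {\zf} Extensionality applied to the right projections plus the fact that two m-sets with equal right projections are equal (Kuratowski pairing is injective). For Union, Power Set, Specification and Replacement the pattern is: take the {\zf} set guaranteed by the corresponding {\zf} axiom (e.g.\ $\pow(y)$, or the {\zf}-range of the relativized $\varphi$), check it is a subset of some $W_\beta$ using cumulativity so that tagging it with $0$ lands in $W_{\beta^+} \subseteq \W$, and verify the membership biconditional transports along the tag. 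The one subtlety is Replacement: the functional premise $(\forall a \zin x: \exists!\, b\colon \varphi^*)$ is about m-objects, and I must feed the {\zf} Replacement schema a formula whose values are m-objects; {\zf} Replacement then yields a {\zf} set of m-objects, whose rank is bounded because all the m-objects live in $\bigcup_\beta W_\beta$ and {\zf} Replacement also bounds the sup of their ranks — this rank bound is what lets me find a single $W_\beta$ containing them all.

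The ordered-pair axioms \ref{ax:pair-empty}--\ref{pair_ext} are comparatively direct. By \autoref{lem:u_set}, an m-object satisfies $\Pair(x)^*$ iff $x = \pair{1}{\pair{a}{b}}$; for such $x$, $a' \zin x$ is false for every $a'$ (Ordered Pair Emptiness), $u \zpl x \leftrightarrow u = a$ and $v \zpr x \leftrightarrow v = b$ (Projection Uniqueness and Both-Or-Neither), and two such m-objects with the same projections have the same $\pair{a}{b}$ and hence are equal (Ordered Pair Extensionality, again by injectivity of Kuratowski pairing). For Ordered Pair Formation, given m-objects $a, b$ with $a \in W_\beta$, $b \in W_\gamma$, cumulativity puts both in $W_\delta$ for $\delta = \max(\beta,\gamma)$, so $\pair{1}{\pair{a}{b}} \in W_{\delta^+} \subseteq \W$ is the required witness. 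Foundation \ref{found} is the remaining case: I would derive it from {\zf} Foundation applied to the {\zf}-set $\{a : a \zin x \vee a \zpl x \vee a \zpr x\}$ — nonempty when $x \neq^* \emptyset^*$ — using that this set is actually a {\zf} set because all such $a$ lie in a fixed $W_\beta$ by \autoref{prev_tier_rel}; a $\in$-minimal element of it is exactly a witness with no further $\zin/\zpl/\zpr$-predecessor inside $x$.

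I expect the main obstacle to be the bookkeeping in \textbf{Replacement} \ref{rep}: correctly relativizing the parameter formula $\varphi$ so that the {\zf} Replacement instance I invoke has a genuinely functional premise on the {\zf} side, and then extracting from {\zf} Replacement not just a set of m-objects but the uniform ordinal bound needed to tag the whole range at once and stay inside {\W}. The analogous rank-bounding step recurs (less painfully) in every ``construct a new m-set'' case, so getting the cumulativity-plus-rank argument packaged cleanly once — essentially a lemma of the form ``any {\zf} set all of whose members are m-objects is included in some $W_\beta$'' — is the key move that makes the rest routine. (This is exactly the kind of step where the Isabelle/{\zf} formalization earns its keep.)
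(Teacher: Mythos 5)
Your overall strategy is exactly the paper's: verify the translation axiom by axiom, with the same toolkit (\autoref{thm:zset-or-zpair}, \autoref{lem:W-subset}, \autoref{prev_tier_rel}, \autoref{lem:u_set}), and your Power Set sketch is essentially the paper's worked example (content $\{0\}\times\pow(x')$, tag with $0$, land two tiers up). The rank-bounding lemma you isolate (``any {\zf} set of m-objects is included in some $W_\beta$'') is indeed the recurring step, and your treatment of Replacement and of the ordered-pair axioms is sound.

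However, your Foundation case (\ref{found}) contains a genuine error. You propose to apply {\zf} Foundation to $\{a : a \zin x \vee a \zpl x \vee a \zpr x\}$ (which for an m-set $x=\pair{0}{x'}$ is just $x'$) and claim an $\in$-minimal element of it is a $\zin/\zpl/\zpr$-minimal one. That is false: the relations $\zin$, $\zpl$, $\zpr$ are not subrelations of $\in$, because their ``predecessors'' are buried inside the Kuratowski encoding. If $a=\pair{0}{a'}=\{\{0\},\{0,a'\}\}$ and $b\in a'$, then $b \zin a$ but $b\notin a$ (the only members of $a$ are $\{0\}$ and $\{0,a'\}$). Concretely, for $x'=\{\pair{0}{\emptyset},\,\pair{0}{\{\pair{0}{\emptyset}\}}\}$ both elements are $\in$-minimal in $x'$, yet the second has a $\zin$-predecessor in $x'$, so {\zf} Foundation applied to $x'$ can return a non-witness. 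The repair is a rank-minimality argument rather than $\in$-minimality: choose $a\in x'$ with the least ordinal $\beta$ such that $a\in W_\beta$ (possible since every member of $x'$ is an m-object); then \autoref{prev_tier_rel} puts any $\zin/\zpl/\zpr$-predecessor of $a$ in some $W_\gamma$ with $\gamma<\beta$, so no such predecessor can lie in $x'$. A smaller omission: you never address Infinity (\ref{inf}), which is the one set axiom whose witness is not obtained by a single appeal to the corresponding {\zf} axiom plus tagging --- one must build (by recursion, using {\zf} Infinity) the set of tagged numerals $\pair{0}{\emptyset}, \pair{0}{\{\pair{0}{\emptyset}\}},\ldots$, check closure under the translated successor clause, and bound its rank (it sits inside $W_\omega$, so the tagged witness lies in $W_{\omega^+}$). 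With these two points fixed, your proof plan coincides with the paper's, which details only the Power Set case in print and delegates the remaining lemmas to the Isabelle/{\zf} formalisation.
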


The proof of this theorem simply observes the conjunction of a number of lemmas, each of which shows
for a {\zfp} axiom $\phi$ that $\phi^*$ holds in {\zf}.
Most of these lemmas are straightforward.
% because {\zf} is a similar theory, and {\W} is a model similar to {\V}.
Here we show a representative example:

\begin{lemma}
  The translation of {\zfp}'s Power Set axiom holds in {\zf}.
\end{lemma}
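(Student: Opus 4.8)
The plan is to unfold the ZFP-level abbreviations inside the Power Set axiom, push the translation $(\:\cdot\:)^*$ through, and then exhibit an explicit ZF witness obtained by applying the ordinary ZF power-set operation to the ``contents'' of the argument. Recall that S3 reads $\allS x:\exists y:\forall z: z\in y \leftrightarrow z\subseteq x$, where in ZFP the term $z\subseteq x$ abbreviates $\Set(z)\wedge\Set(x)\wedge(\forall c\in z: c\in x)$ and $\Set(x)$ abbreviates $\neg\exists b: b\pl x$; also $\allS x:\varphi$ is $\forall x:\Set(x)\to\varphi$. Using \autoref{def:translation} and \autoref{lem:translation-existential}, the goal becomes: for every $x$ with $\mathcal{H}(x)$ and $\Set(x)^*$, there is a $y$ with $\mathcal{H}(y)$ such that for every $z$ with $\mathcal{H}(z)$ we have $z\zin y$ iff $\Set(z)^*$ and $\Set(x)^*$ hold and every $c$ with $c\zin z$ also satisfies $c\zin x$.

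First I would fix such an $x$. By \autoref{lem:u_set}, $\mathcal{H}(x)$ together with $\Set(x)^*$ gives $x=\pair{0}{x'}$ for a unique $x'$, and by \autoref{thm:zset-or-zpair} there is an ordinal $\beta$ with $x'\subseteq W_\beta$. Then I take $y$ to be the m-object $\pair{0}{(\{0\}\times\pow(x'))}$, i.e.\ the m-set whose contents are exactly the m-sets $\pair{0}{z'}$ with $z'$ a ZF-subset of $x'$; this $y$ exists in ZF using only ZF's Power Set axiom together with the existence of cartesian products and Kuratowski pairs. Since $x'\subseteq W_\beta$ we get $\pow(x')\subseteq\pow(W_\beta)$, hence $\{0\}\times\pow(x')\subseteq\{0\}\times\pow(W_\beta)\subseteq W_{\beta^+}$ by the definition of $W_{\beta^+}$; therefore $y\in\{0\}\times\pow(W_{\beta^+})\subseteq W_{\beta^{++}}$ and $\mathcal{H}(y)$ holds.

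Next I would check the biconditional for an arbitrary $z$ with $\mathcal{H}(z)$. Unwinding \autoref{zf_rel}, $z\zin y$ holds exactly when $z\in\{0\}\times\pow(x')$, i.e.\ $z=\pair{0}{z'}$ with $z'\subseteq x'$ in ZF. On the other side: $\Set(x)^*$ is true by hypothesis; $\Set(z)^*$ is equivalent, via \autoref{lem:u_set} and $\mathcal{H}(z)$, to $z=\pair{0}{z'}$ for some $z'$; and since $c\zin z$ already forces $\mathcal{H}(c)$ by \autoref{prev_tier_rel}, the clause ``every $c$ with $c\zin z$ has $c\zin x$'' says precisely that $c\in z'$ implies $c\in x'$, i.e.\ $z'\subseteq x'$. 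Because a Kuratowski pair determines its second component uniquely, both sides collapse to the single statement $\exists z': z=\pair{0}{z'}\wedge z'\subseteq x'$, so the biconditional holds, completing the proof.

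No genuine obstacle is expected; the only thing demanding care is the bookkeeping of abbreviations and tiers --- in particular remembering that ZFP's $\subseteq$ carries the extra $\Set$ conjuncts, so its translation must be handled through \autoref{lem:u_set} rather than being treated as raw ZF $\subseteq$, and confirming that the witness $y$ really lands in some $W_\gamma$. This is the representative example precisely because every remaining axiom's translation is dispatched by the same recipe: unfold the ZFP abbreviations, use \autoref{thm:zset-or-zpair} and \autoref{prev_tier_rel} to stay inside the hierarchy, and build a witness by applying the corresponding ZF operation to the second components of the relevant m-objects.
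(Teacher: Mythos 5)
Your proposal is correct and follows essentially the same route as the paper's proof: the same witness $y=\pair{0}{\{0\}\times\pow(x')}$, the same equivalence chain reducing both sides of the biconditional to $\exists z':z=\pair{0}{z'}\wedge z'\subseteq x'$ via \autoref{lem:u_set}, and the same tier bookkeeping placing $y$ in $W_{\beta^{++}}$. Your explicit appeal to \autoref{prev_tier_rel} to discharge the $\mathcal{H}$ guard on the inner quantifier is a minor (and welcome) elaboration of a step the paper leaves implicit.
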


\begin{proof}
    First, we find the translation using \autoref{def:translation} and \autoref{lem:translation-existential}:
    $$
      \forall x:
          \mathcal{H}(x)
        \rightarrow
          (
             {\Set(x)}^*
           \rightarrow
             (
              \exists y:
                  \mathcal{H}(y)
                \wedge
                  \forall z:
                      \mathcal{H}(z)
                    \rightarrow
                      (
                         z \zin y
                       \leftrightarrow
                         ((z \subseteq x)^*)
                      )
             )
          )
    $$
    Let $x$ be such that $\mathcal{H}(x)$, and suppose ${\Set(x)}^*$.
    By~\autoref{lem:u_set}, $x = \pair{0}{x'}$ for some set $x'$.
    Let $y = \pair{0}{y'}$ where $y' = \{0\} \times \mathcal{P}(x')$ be our candidate for the power set.
    We must show that $y$ has the property $\forall z: \mathcal{H}(z) \rightarrow (z \zin y \leftrightarrow (z \subseteq x)^*)$, and also that $y$ is indeed a member of {\W}.
    Fix $z$ and assume $\mathcal{H}(z)$, then:
    \begin{align*}
        z \zin y
        &\leftrightarrow z \in y'
        \tag*{by def of $y$ and $\zin$}
    \\
        &\leftrightarrow z \in \{0\} \times \mathcal{P}(x')
        \tag*{by def of $y'$}
    \\
        &\leftrightarrow \exists z': z = \pair{0}{z'} \wedge z' \subseteq x'
        \tag*{by def of $\times$ and $\mathcal{P}$}
    \\
        &\leftrightarrow {\Set(z)}^* \wedge (\forall a: a \zin z \rightarrow a \zin x)
        \tag*{since $z = \pair{0}{z'}, z' \subseteq x'$}
    \\
        &\leftrightarrow {\Set(z)}^* \wedge {\Set(x)}^* \wedge (\forall a: a \zin z \rightarrow a \zin x)
        \tag*{since $\mathcal{H}(x)$, $x = \pair{0}{x'}$}
    \\
      & \leftrightarrow (z \subseteq x)^*
        \tag*{because $\mathcal{H}(z)$}
    \end{align*}
    It now remains to show that $\mathcal{H}(y)$.
    From $\mathcal{H}(x)$, we have that $x\in W_{\alpha}$ for some ordinal $\alpha$.
    By \autoref{lem:W-subset}, $x\in W_{\alpha^+}$, and by \autoref{thm:zset-or-zpair}, $x' \subseteq W_\alpha$. Then:
    \begin{align*}
        x'\subseteq W_\alpha
        &\rightarrow \mathcal{P}(x') \subseteq \mathcal{P}(W_\alpha)\\
        &\rightarrow \{0\}\times \mathcal{P}(x') \subseteq \{0\}\times \mathcal{P}(W_\alpha) \\
        &\rightarrow y' \subseteq \{0\}\times \mathcal{P}(W_\alpha)
        \tag*{by def of $y'$}\\
        &\rightarrow y' \subseteq W_{\alpha^+}
        \tag*{because $\{0\}\times \mathcal{P}(W_\alpha) \subseteq W_{\alpha^+}$} \\
        &\rightarrow y\in W_{\alpha^{++}}
        \tag*{by def of $y = \pair{0}{y'}$}\\
        &\rightarrow \mathcal{H}(y)
        \tag*{by def of $\mathcal{H}$}
        \end{align*} \qed
    \end{proof}

\section{Conclusion}\label{sec:future}

\subsection{Summary of Contributions}

\subsubsection{Presenting {\zf} Set Theory using Definite Descriptions.}

In \autoref{sec:fol-zf}, we give a formal presentation of {\zf} that accounts for the technical
details, whilst also defining notation for widely used operations.
Although correct formal definitions of this notation can be found in computer implementations of set
theory, we have not seen definite descriptions used for this in published articles.
Definite descriptions allow defining terms in a compact and readable way without needing to add FOL
function symbols, extend the model, or otherwise appeal to the meta-level.
We show precisely how Kuratowski pairs and their operations are defined
and highlight issues arising from their set representations.

\subsubsection{Axiomatizing {\zfp}.}

Motivated by issues with the set representation in pure {\zf} set theory of conceptually non-set
objects, in \autoref{sec:zfp} we introduce Zermelo-Fraenkel Set Theory with Ordered Pairs, which
extends {\zf} with predicate symbols $\pl$ and $\pr$ and axioms to implement primitive non-set
ordered pairs.
{\zfp} is akin to some alternative set theories that use urelements as genuine non-set objects in
the domain, with the difference that ZFP's urelements have meaningful internal structure endowed by the
axiomatisation of $\pl$ and $\pr$.
The design of {\zfp} is deliberately similar to that of {\zf}, so that we can better understand the
relationship between the two theories.
We axiomatize {\zfp}, and discuss how the axioms of {\zf} were
modified to yield the axioms of {\zfp}.
As a result, we gain a set theory with two types of individuals,
both of which have a notion of `container',
which is unusual as urelements are usually structureless.
The primitive ordered pairs of {\zfp} are unlike those typical of set theory, as they are free from
any notion of representation.

\subsubsection{Showing {\zfp} Consistent.}

In \autoref{sec:zfp-model}, we construct a transfinite hierarchy to be the domain of a model for
{\zfp} and we define relations on this domain to be interpretations for $\in$, $\pl$, and $\pr$.
We show that the resulting structure satisfies the axioms of {\zfp}, i.e., it is a model for
{\zfp}.
As a result, we show {\zfp} is consistent if {\zf} is.
% The method uses a disjoint unions at each tier to `tag' members of this domain so that we can easily
% distinguish between sets and ordered pairs.

% \bigskip

% We believe that the methodologies used in both the axiomatisation of {\zfp} and the construction
% of a model for {\zfp} are useful and can be reused.

\subsection{Future Work}

\subsubsection{Model Theoretic Status of {\zf} and {\zfp}.}

Axiomatisations of both {\zf} and {\zfp} are given within this paper, and we are aware that the sets
of {\zfp} behave in a similar fashion to those in {\zf}.
We suggest employing model-theoretic techniques to give a more detailed formal account of the relationship between
the formulas of both theories, as well as the models.

\subsubsection{Implementing {\zfp}.}

Preliminary experiments have taken place in implementing {\zfp} as an object logic for Isabelle.
Further work on this will allow comparing mathematics formalised in {\zf} and in {\zfp}, and thus
allow comparing the expressivity, and automatability of both theories.
Moreover, there is already a large library of mathematics formalised in Isabelle/ZF.
Once the formal relationship between {\zf} and {\zfp} has been established,
we will attempt to translate mathematics between both bases.

\subsubsection{Towards Abstract Data Types in Set Theory.}

In this paper we identified a role performed by some sets in {\zf}, namely the role of being an ordered
pair for some representation (e.g., Kuratowski),
together with the FOL abbreviations for their relations.
We axiomatised a new set theory in which this role can be performed by non-set objects,
yet maintain the same existence conditions and abstract behaviour of this role.
We will attempt to abstract and adapt this method, to yield set theories in which the members of
mathematical structures can be genuine non-sets dedicated to their role.
% There is a clear path of using similar methods to construct models of such theories,
% but axiomatising them in first-order logic may be difficult.
We believe such a framework could be helpful when using
set theory to formalise mathematics.

\bibliography{zfp-paper}
\bibliographystyle{plain}
\end{document}